\documentclass[a4paper,12pt]{article}
\usepackage[left=3.17cm,right=3.17cm,top=2.54cm,
headheight=0.5cm,headsep=0.54cm,bottom=2.54cm,footskip=0.79cm
]{geometry}

\usepackage{amsmath,amsthm,amssymb,cite}

\usepackage{hyperref}

\newtheorem{corollary}{Corollary}
\newtheorem{definition}{Definition}
\newtheorem{lemma}{Lemma}
\newtheorem{theorem}{Theorem}

\begin{document}

\title{A note on unitary invariance of Connes spectral distances of quantum states}

\author{Bing-Sheng Lin$^{1,2,}$\thanks{Corresponding author. 
		E-mail address: sclbs@scut.edu.cn}, Ji-Hong Wang$^{3}$, Zhi-Kang You$^{1}$\\
	\small $^{1}$ School of Mathematics, South China University of Technology,
		Guangzhou 510641, China\\
	\small $^{2}$ Institute for Mathematics, Astrophysics and Particle Physics,\\
	\small  Radboud University Nijmegen, Heyendaalseweg 135, 6525 AJ Nijmegen, The Netherlands\\
	\small $^{3}$ Guangzhou University of Software, Guangzhou 510990, China
}

\date{\today}

\maketitle

\begin{abstract}
In this paper, we study the properties of Connes spectral distances between quantum states under unitary transformations. We mainly focus on spectral triples with matrix algebras acting on finite dimensional Hilbert spaces. We prove that there are some finite spectral triples in which the Lipschitz seminorms are equal to the operator norms. We also explicitly construct some spectral triples in which the Connes spectral distances between quantum states are exactly the quantum trace distances.
These results are helpful for us to better study the relationships among Connes spectral distance, quantum trace distance and other quantum distance measures. These concrete examples are significant for studies of geometric structures of finite spectral triples and mathematical relations of qubits and other quantum states in the framework of noncommutative geometry.\\
  
\textit{Keywords}: Connes spectral distance; Unitary invariance; Quantum trace distance.
  

\end{abstract}

\section{Introduction}
In the 1980’s, Connes formulated the mathematically rigorous framework of noncommutative geometry \cite{Connes}.
The main objects in noncommutative geometry are spectral triples, which can be regarded as noncommutative versions of normal spaces.
Due to the noncommutativity, there is no traditional point in a spectral triple. This is different from normal commutative spaces.
In a spectral triple, a pure state is the
analog of a traditional point in a normal commutative space.
So there is no normal concept of distance between two points in a noncommutative space, but
we can calculate some kinds of distance measure between the states, such as the
Connes distance \cite{Connes1}.
The Connes spectral distance between pure states corresponds to the geodesic distance
between points.
The Connes spectral distances in some kinds
of noncommutative spaces have already been studied in literature
\cite{Bimonte,Dai,Cagnache,Wallet,Martinetti1,Martinetti,DAndrea,Pythagoras,Pythagoras1,Franco,Scholtz,Chaoba,Revisiting,Kumar,Barrett,Chakraborty,Lin1,Lin2,Clare}.
For example, Cagnache \textit{et. al.} computed
Connes spectral distances between the pure states which corresponds to
eigenfunctions of the quantum harmonic oscillators in the Moyal
plane\cite{Cagnache}. Martinetti \textit{et. al.} obtained the
spectral distance between coherent states in the so-called double Moyal plane
\cite{Martinetti}. D’Andrea \textit{et. al.} have studied the Pythagoras’
theorem in noncommutative geometry \cite{Pythagoras}. They showed that for non-pure states
it is replaced by some Pythagoras inequalities. Franco and Wallet also studied
metrics and causality on Moyal planes \cite{Franco}. Scholtz and his collaborators have studied the Connes
spectral distances of harmonic oscillator states and also coherent states in
Moyal plane and fuzzy space \cite{Scholtz,Chaoba,Revisiting}.
Kumar \textit{et. al.} used Dirac eigen-spinor method to compute spectral distances in doubled Moyal plane \cite{Kumar}. Chakraborty \textit{et. al.} also studied the spectral
distance on Lorentzian Moyal plane \cite{Chakraborty}.
Lin \textit{et. al.} studied the relation between Connes spectral
distance and the noncommutativity of phase spaces \cite{Lin2}.

In quantum physics, most of the properties of physical systems are invariant
under unitary transformations. Connes spectral distances are also some essential
physical properties of quantum states. Therefore, it is interesting to study
some types of Connes spectral distances which are invariant under unitary transformations of the states.
This is also a special kind of isometry of the state space being studied in some other literature, such as Ref.~\cite{Martinettip}.
The unitary invariance of spectral distances can be regarded as a natural property of a spectral triple.
Quantum trace distance is one of the most useful distance measures between
quantum states in quantum information sciences \cite{Nielsen}. It is also unitary invariant.
So it is also very significant to study spectral triples with Connes spectral
distances which equal quantum trace distances.

In the present work, we will study some properties and concrete examples of Connes spectral distances between quantum states under unitary transformations.
This paper is organized as follows. In Sec.~\ref{sec2}, we review the definitions of spectral triples and Connes spectral distances.
In Sec.~\ref{sec3}, we review some elementary properties of Connes spectral distances and optimal elements.
In Sec.~\ref{sec4}, we study the unitary invariance of Connes spectral distances
of quantum states. In Sec.~\ref{sec5}, we prove that there are some finite spectral triples in which the Lipschitz seminorm is equal to the operator norm, and then the Connes spectral distances between states are exactly the quantum trace distances.
In Sec.~\ref{sec6}, we explicitly construct some spectral triples equip with quantum trace distances between one-qubit states. 
Some discussions and conclusions are given in Sec.~\ref{sec7}.

\section{Connes spectral distances in spectral triples}\label{sec2}

In the framework of noncommutative geometry \cite{Connes}, a noncommutative version of space corresponds to a spectral triple $(\mathcal{A}, \mathcal{H},
\mathcal{D})$, where $\mathcal{A}$ is an involutive algebra acting on a
Hilbert space $\mathcal{H}$ through a representation $\pi$, and the Dirac
operator $\mathcal{D}$ is a self-adjoint, densely defined operator on
$\mathcal{H}$ which satisfies:

1.\;\;$\mathcal{D}$ can be unbounded operator but $[\mathcal{D},
\pi (a)]$ is bounded with $a \in \mathcal{A}$;

2.\;\;$\mathcal{D}$ has compact resolvent, for $\lambda \in
\mathbb{C} \backslash \mathbb{R}$, $(\mathcal{D}- \lambda)^{- 1}$ is compact
when the algebra $\mathcal{A}$ is unital or $\pi (a)  (\mathcal{D}-
\lambda)^{- 1}$ be compact if it is non-unital.

In the present work, we will consider finite matrix
algebras $\mathcal{A}=\mathbb{M}_n (\mathbb{C})$ with representations $\pi$ on
finite dimensional Hilbert spaces $\mathcal{H}= \bigoplus_{i = 1}^m
\mathbb{C}^n$, $n \geqslant 2$, $m \geqslant 1$, and $\pi (a) \in
\mathbb{M}_{mn} (\mathbb{C}) = L (\mathcal{H})$.
In most of the cases, we always assume that the
representation $\pi$ is unital, namely, $\pi (\mathbb{I}_{n}) =\mathbb{I}_{mn}$. For example, $\pi (a) =\mathbb{I}_m \otimes
a$. So for any unitary element $U \in \mathcal{A}$, $\pi (U)$ is also unitary.

A state $\omega$ on the algebra $\mathcal{A}$ is a linear functional $\omega :
\mathcal{A} \rightarrow \mathbb{C}$ which is positive i.e. $\omega (a^{\ast}
a) \geq 0, \forall a \in \mathcal{A}$ and has a norm $1$. For the quantum
state $\omega$ which is normal and bounded, it can also be represented by the
density operator $\rho \in \mathcal{A}$, which is positive semi-definite,
self-adjoint and trace $1$. The action of the state $\omega$ on an element $e
\in \mathcal{A}$ can be written as \cite{Revisiting}
\begin{equation}
  \omega (e) = \mathrm{tr} (\rho e) .
\end{equation}
So the state $\omega$ can be represented by the corresponding density matrix
$\rho$. For convenience, we will denote the state $\omega$ by the
corresponding density matrix $\rho$ in the following content if there is no
confusion.

Suppose the quantum states $\omega_1$ and $\omega_2$ correspond to the density
matrices $\rho_1$ and $\rho_2$, respectively.

\begin{definition}
  \cite{Connes1} The Connes distance between quantum states $\omega_1$ and
  $\omega_2$ is
  \begin{equation}
    d (\omega_1, \omega_2) \equiv d (\rho_1, \rho_2) = \sup_{e \in B} |
    \mathrm{tr} (\rho_1 e) - \mathrm{tr} (\rho_2 e) | = \sup_{e \in B} |
    \mathrm{tr} (\Delta \rho \, e) |,
  \end{equation}
  where $\Delta \rho = \rho_1 - \rho_2$, and the set
  \begin{equation}
    B := \{e \in \mathcal{A}: \|[\mathcal{D}, \pi (e)]\|_{op} \leqslant 1\},
  \end{equation}
  where $\|A\|_{op}$ is the operator norm of $A$,
  \begin{equation}
    \|A\|_{op} \equiv \sup_{\psi \in \mathcal{H}, \| \psi \| = 1} \|A \psi \|,
    \qquad \|A\|^2 \equiv \mathrm{tr} (A^{\dag} A) = \langle A, A \rangle,
  \end{equation}
  and $\langle \cdot, \cdot \rangle \equiv \langle \cdot, \cdot
  \rangle_{\mathrm{HS}}$ is the Hilbert-Schmidt inner product, $\langle A, B
  \rangle \equiv \langle A, B \rangle_{\mathrm{HS}} := \mathrm{tr} (A^{\dag}
  B)$.
\end{definition}

The inequality $\|[\mathcal{D}, \pi (e)]\|_{op} \leqslant 1$ is the so-called
\textit{ball condition}. The map $L (a) = \|[D, \pi (a)]\|_{op}$ is
a Lipschitz seminorm.

Obviously, if $e \in B$, then we also have $- e \in B$, so we also have
\begin{equation}
  d (\rho_1, \rho_2) = \sup_{e \in B} [\mathrm{tr} (\rho_1 e) - \mathrm{tr}
  (\rho_2 e)] = \sup_{e \in B} [\mathrm{tr} (\Delta \rho \, e)] =
  \mathrm{tr} (\Delta \rho \, e_o),
\end{equation}
where $e_o$ is the so-called \textit{optimal element}.
Note that the last equality above makes sense only if the supremum is attained (for simplicity, we only consider these cases in the present work).
Since
Hermitian elements can give the supremum in the Connes spectral distance
functions \cite{Iochum}, one only need to consider the optimal elements
$e_o$ being Hermitian.

Regarding that $(\Delta \rho)^{\dag} = \Delta \rho$, we also have
\begin{equation}
  d (\rho_1, \rho_2) = \sup_{e \in B} (\mathrm{tr} [(\Delta \rho)^{\dag} 
  \, e]) = \sup_{e \in B} \langle \Delta \rho, e \rangle =
  \langle \Delta \rho, e_o \rangle .
\end{equation}
Obviously, there is $d (\rho_1, \rho_2) \geqslant 0$. In the present work, we
only consider the cases $d (\rho_1, \rho_2) < + \infty$.

\section{Elementary properties of Connes spectral distance}\label{sec3}
In this section, we will review some basic but important properties of Connes spectral distances and the optimal elements. Some of these conclusions have already appeared in other research papers in a similar form. We have listed these simple properties here mainly for the convenience of reference in the following text.
\begin{lemma}\label{d0}\cite{Martinettif}
  For any states $\rho_1, \rho_2$, the Connes spectral distance $d (\rho_1,
  \rho_2) = 0$ if and only if $\rho_1 = \rho_2$. On the other hand, for any
  different states $\rho_1, \rho_2$, there should be $d (\rho_1, \rho_2) > 0$.
\end{lemma}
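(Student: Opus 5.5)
The trivial direction comes first. If $\rho_1=\rho_2$, then $\Delta\rho=0$, so $\mathrm{tr}(\Delta\rho\, e)=0$ for every $e\in B$ and hence $d(\rho_1,\rho_2)=0$.

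For the converse I will argue by contraposition. Assume $\rho_1\neq\rho_2$; the goal is an element $e\in B$ with $\mathrm{tr}(\Delta\rho\,e)>0$. Since $\Delta\rho$ is Hermitian, nonzero and traceless, the natural candidate is $e=t\,\Delta\rho$ with $t>0$. This gives
\begin{equation}
\mathrm{tr}(\Delta\rho\, e)=t\,\mathrm{tr}\big((\Delta\rho)^2\big)=t\|\Delta\rho\|^2>0 .
\end{equation}
The remaining issue is the ball condition. In the finite-dimensional setting $\mathcal{D}$ is a bounded matrix, so $[\mathcal{D},\pi(\Delta\rho)]$ has finite operator norm $c$. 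If $c=0$ we may take any $t>0$. If $c>0$, the choice $t=1/c$ gives $\|[\mathcal{D},\pi(e)]\|_{op}=1$, so $e\in B$, and then $d(\rho_1,\rho_2)\ge \|\Delta\rho\|^2/c>0$.

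The real obstacle is the case $c=0$, i.e.\ when $\pi(\Delta\rho)$ commutes with $\mathcal{D}$. Then every multiple of $\Delta\rho$ lies in $B$, the supremum is $+\infty$, and in particular it is not $0$. So in this case too $d(\rho_1,\rho_2)\neq0$, and the paper's standing restriction to finite distances is not needed for the argument. Combining both cases, $d(\rho_1,\rho_2)=0$ forces $\rho_1=\rho_2$, and distinct states have $d>0$.

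A more general version of the lemma, with $\mathcal{D}$ unbounded as in the cited reference, would need a density or separation argument. The idea there is that the Lipschitz elements separate states, which uses that $\{a:[\mathcal{D},\pi(a)]\text{ bounded}\}$ is dense in $\mathcal{A}$ and spans it linearly. For the finite matrix algebras considered here this is automatic, since every $a\in\mathbb{M}_n(\mathbb{C})$ has a bounded commutator.
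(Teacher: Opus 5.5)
Your proof is correct and matches the paper's argument: test the ball condition on a positive multiple of $\Delta\rho$, and split on whether $\|[\mathcal{D},\pi(\Delta\rho)]\|_{op}$ vanishes. If it vanishes, $d=+\infty$; if not, normalizing gives $d\ge \|\Delta\rho\|^2/\|[\mathcal{D},\pi(\Delta\rho)]\|_{op}>0$.
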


\begin{proof}
Obviously, for the same states $\rho_1 = \rho_2$,
there is $d (\rho_1, \rho_2) = 0$.

For different states $\rho_1, \rho_2$, there is $\Delta \rho \neq 0$, and $\|
\Delta \rho \| > 0$. If $\|[\mathcal{D}, \pi (\Delta \rho)]\|_{op} = 0$, then
for any number $c > 0$, we always have $\|[\mathcal{D}, \pi (c \Delta
\rho)]\|_{op} = c \|[\mathcal{D}, \pi (\Delta \rho)]\|_{op} = 0$. Therefore,
there is $e_c \equiv c \Delta \rho \in B$ for $\forall c > 0$. But in this
case, we have
\begin{equation}
  d (\rho_1, \rho_2) = \sup_{e \in B} [\mathrm{tr} (\Delta \rho
  \, e)] \geqslant \mathrm{tr} (\Delta \rho \, e_c)
  = c \| \Delta \rho \|^2, \quad \forall c > 0.
\end{equation}
So there should be $d (\rho_1, \rho_2) = + \infty$.

If $\|[\mathcal{D}, \pi (\Delta \rho)]\|_{op} \neq 0$, then one can choose
\begin{equation}
  e = \frac{\Delta \rho}{\|[\mathcal{D}, \pi (\Delta \rho)]\|_{op}},
\end{equation}
and there is $\|[\mathcal{D}, \pi (e)]\|_{op} = 1$. So we have $e \in B$, and
\begin{equation}
  | \mathrm{tr} (\Delta \rho \, e) | = \frac{| \mathrm{tr}
  (\Delta \rho \Delta \rho) |}{\|[\mathcal{D}, \pi (\Delta \rho)]\|_{op}} = \frac{\| \Delta \rho \|^2}{\|[\mathcal{D}, \pi (\Delta
  \rho)]\|_{op}} > 0.
\end{equation}
Therefore, for different states $\rho_1, \rho_2$, there should be some $e \in
B$ which satisfies $| \mathrm{tr} (\Delta \rho \, e) | > 0$. So
we have
\begin{equation}
  d (\rho_1, \rho_2) = \sup_{e \in B} | \mathrm{tr} (\Delta \rho
  \, e) | > 0.
\end{equation}
\end{proof}

\begin{lemma}
  The spectral triples $(\mathcal{A}, \mathcal{H}, \mathcal{D})$ and
  $(\mathcal{A}, \mathcal{H}, \lambda \mathbb{I}+\mathcal{D})$ have the same
  metric, where $\lambda$ is any real number.
\end{lemma}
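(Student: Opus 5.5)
The plan is to show that the ball $B$ in the definition of the Connes distance is unchanged when $\mathcal{D}$ is replaced by $\lambda\mathbb{I}+\mathcal{D}$. Once that is established, the supremum defining $d(\rho_1,\rho_2)$ runs over the same set of elements $e$ and the same functional $\mathrm{tr}(\Delta\rho\, e)$. Hence the two triples assign the same distance to every pair of states.

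The key step is a commutator identity. For any $e\in\mathcal{A}$,
\begin{equation}
[\lambda\mathbb{I}+\mathcal{D},\pi(e)] = \lambda[\mathbb{I},\pi(e)] + [\mathcal{D},\pi(e)] = [\mathcal{D},\pi(e)],
\end{equation}
since the identity operator $\mathbb{I}$ on $\mathcal{H}$ commutes with every operator. This does not require the representation $\pi$ to be unital, because $\mathbb{I}$ here is the identity on $\mathcal{H}$ itself. It follows that $\|[\lambda\mathbb{I}+\mathcal{D},\pi(e)]\|_{op}=\|[\mathcal{D},\pi(e)]\|_{op}$. So the two Lipschitz seminorms coincide, and the balls $B$ and $B_\lambda$ are literally the same subset of $\mathcal{A}$.

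I would also briefly verify that $(\mathcal{A},\mathcal{H},\lambda\mathbb{I}+\mathcal{D})$ is a spectral triple, so that the statement makes sense. Self-adjointness holds because $\lambda$ is real. Boundedness of commutators follows from the identity above. The resolvent condition holds because $(\lambda\mathbb{I}+\mathcal{D}-\mu)^{-1}=(\mathcal{D}-(\mu-\lambda))^{-1}$, and $\mu-\lambda\in\mathbb{C}\setminus\mathbb{R}$ whenever $\mu$ is, so compactness (or compactness after multiplying by $\pi(a)$) is inherited. In the finite-dimensional setting of this paper this last point is automatic.

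There is no real obstacle here. The only care needed is the domain issue for an unbounded $\mathcal{D}$: the commutator identity is understood on the domain of $\mathcal{D}$, which equals the domain of $\lambda\mathbb{I}+\mathcal{D}$, so both commutators are defined on the same dense subspace and agree there.
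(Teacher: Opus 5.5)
Your proposal is correct and follows the same route as the paper: the paper also observes that the ball $B$ is unchanged under $\mathcal{D}\mapsto\lambda\mathbb{I}+\mathcal{D}$, and concludes that the metric is the same. You make explicit the commutator identity $[\lambda\mathbb{I}+\mathcal{D},\pi(e)]=[\mathcal{D},\pi(e)]$ that the paper calls obvious, and your extra checks of the spectral-triple axioms are harmless additions.
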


\begin{proof}
Obviously, there is
\begin{equation}
  B = \{e \in \mathcal{A}: \|[\mathcal{D}, \pi (e)]\|_{op} \leqslant 1\} = \{e
  \in \mathcal{A}: \|[\lambda \mathbb{I}+\mathcal{D}, \pi (e)]\|_{op}
  \leqslant 1\} .
\end{equation}
So the spectral triples $(\mathcal{A}, \mathcal{H}, \mathcal{D})$ and
$(\mathcal{A}, \mathcal{H}, \lambda \mathbb{I}+\mathcal{D})$ have the same
ball conditions and metric.
\end{proof}

So in most of the cases, one can just simply consider the Dirac
operator $\mathcal{D}$ being Hermitian and traceless,
\begin{equation}
  \mathrm{tr} (\mathcal{D}) = 0.
\end{equation}

Furthermore, if we assume that the representation $\pi$ is
\textit{unital}, namely, $\pi (\mathbb{I}) =\mathbb{I}$, we also
have the following result.

\begin{lemma}
  \label{tloe}For any states $\rho_1, \rho_2$, if $e_o$ is the corresponding
  optimal element, then $\lambda \mathbb{I}+ e_o$ is also an optimal element,
  where $\lambda$ is any real number.
\end{lemma}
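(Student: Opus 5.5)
The plan is to check two things about $\lambda \mathbb{I}+ e_o$: that it still lies in the ball $B$, and that it still attains the supremum $\mathrm{tr} (\Delta \rho \, e_o) = d (\rho_1, \rho_2)$. Both reduce to the unitality of $\pi$ and to the fact that $\Delta \rho$ is traceless.

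\textbf{Membership in $B$.} Since $\pi$ is unital, linearity gives $\pi (\lambda \mathbb{I}+ e_o) = \lambda \mathbb{I}+ \pi (e_o)$. The identity commutes with $\mathcal{D}$, so
\begin{equation}
  [\mathcal{D}, \pi (\lambda \mathbb{I}+ e_o)] = \lambda [\mathcal{D}, \mathbb{I}] + [\mathcal{D}, \pi (e_o)] = [\mathcal{D}, \pi (e_o)] .
\end{equation}
Hence $\|[\mathcal{D}, \pi (\lambda \mathbb{I}+ e_o)]\|_{op} = \|[\mathcal{D}, \pi (e_o)]\|_{op} \leqslant 1$, and therefore $\lambda \mathbb{I}+ e_o \in B$. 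If $e_o$ is Hermitian, then so is $\lambda \mathbb{I}+ e_o$, because $\lambda$ is real.

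\textbf{Value of the functional.} Both density matrices have unit trace, so $\mathrm{tr} (\Delta \rho) = \mathrm{tr} (\rho_1) - \mathrm{tr} (\rho_2) = 0$. Consequently
\begin{equation}
  \mathrm{tr} (\Delta \rho \, (\lambda \mathbb{I}+ e_o)) = \lambda \, \mathrm{tr} (\Delta \rho) + \mathrm{tr} (\Delta \rho \, e_o) = \mathrm{tr} (\Delta \rho \, e_o) = d (\rho_1, \rho_2) .
\end{equation}
So $\lambda \mathbb{I}+ e_o$ is an element of $B$ at which the supremum is attained, which is exactly the definition of an optimal element.

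There is no genuine obstacle here. The only point needing care is that unitality of $\pi$ is what makes $\pi (\mathbb{I})$ commute with $\mathcal{D}$. For a non-unital $\pi$, the operator $\pi (\mathbb{I})$ would be a projection that need not commute with $\mathcal{D}$, and the argument would fail.
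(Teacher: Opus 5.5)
Your proposal is correct and matches the paper's proof. Both arguments use unitality of $\pi$ to get $[\mathcal{D}, \pi(\lambda \mathbb{I} + e_o)] = [\mathcal{D}, \pi(e_o)]$, so $\lambda\mathbb{I}+e_o \in B$, and then use $\mathrm{tr}(\Delta\rho) = 0$ to see that $\mathrm{tr}(\Delta\rho\,(\lambda\mathbb{I}+e_o)) = \mathrm{tr}(\Delta\rho\, e_o)$ still attains the supremum.
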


\begin{proof}
For the optimal element $e_o$, there is
\begin{equation}
  \pi (\lambda \mathbb{I}+ e_o) = \lambda \pi (\mathbb{I}) + \pi (e_o) =
  \lambda \mathbb{I}+ \pi (e_o),
\end{equation}
and
\begin{equation}
  \|[\mathcal{D}, \pi (\lambda \mathbb{I}+ e_o)]\|_{op} = \|[\mathcal{D},
  \lambda \mathbb{I}+ \pi (e_o)]\|_{op} = \|[\mathcal{D}, \pi (e_o)]\|_{op}
  \leqslant 1.
\end{equation}
So $\lambda \mathbb{I}+ e_o \in B$. Since $\mathrm{tr} (\Delta\rho)=\mathrm{tr} (\rho_1-\rho_2)=0$, we also have
\begin{equation}
  \mathrm{tr} [\Delta \rho (\lambda \mathbb{I}+ e_o)] = \mathrm{tr} (\Delta
  \rho \cdot \lambda \mathbb{I}) + \mathrm{tr} (\Delta \rho e_o) = \mathrm{tr}
  (\Delta \rho e_o) = \sup_{e \in B} [\mathrm{tr} (\Delta \rho e)] .
\end{equation}
Therefore, $\lambda \mathbb{I}+ e_o$ is also an optimal element for the states
$\rho_1, \rho_2$.
\end{proof}

So if the representation $\pi$ is unital, one can simply consider the optimal elements $e_o \in B$
being Hermitian and traceless,
\begin{equation}
  \mathrm{tr} (e_o) = 0.
\end{equation}
\begin{lemma}
  Denote the spectral distance between the states $\rho_1, \rho_2$ in the spectral
  triple $(\mathcal{A}, \mathcal{H}, \mathcal{D})$ as $d_{\mathcal{D}}
  (\rho_1, \rho_2)$, and there is
  \begin{equation}
    d_{\lambda \mathcal{D}} (\rho_1, \rho_2) = \frac{1}{| \lambda |}
    d_{\mathcal{D}} (\rho_1, \rho_2),
  \end{equation}
  where $\lambda$ is any real number.
\end{lemma}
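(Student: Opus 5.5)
The plan is a direct rescaling argument on the ball $B$; we assume $\lambda \neq 0$ (for $\lambda=0$ the right-hand side is meaningless and $\lambda\mathcal{D}=0$ is not a valid Dirac operator). Write $B_{\mathcal{D}}$ for the ball attached to $\mathcal{D}$. Since the commutator is bilinear and the operator norm is absolutely homogeneous,
\begin{equation}
  \|[\lambda\mathcal{D}, \pi(e)]\|_{op} = |\lambda|\,\|[\mathcal{D}, \pi(e)]\|_{op},
\end{equation}
for every $e \in \mathcal{A}$. Hence $e \in B_{\lambda\mathcal{D}}$ if and only if $\|[\mathcal{D},\pi(|\lambda| e)]\|_{op} \leqslant 1$, which gives the set identity $B_{\lambda\mathcal{D}} = \frac{1}{|\lambda|} B_{\mathcal{D}}$. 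Here we use linearity of $\pi$, so that $\pi(|\lambda|e)=|\lambda|\pi(e)$.

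With this identity, substitute $e = e'/|\lambda|$, where $e'$ ranges over $B_{\mathcal{D}}$, into the definition of the distance:
\begin{equation}
  d_{\lambda\mathcal{D}}(\rho_1,\rho_2) = \sup_{e' \in B_{\mathcal{D}}} \left|\mathrm{tr}\left(\Delta\rho\, \frac{e'}{|\lambda|}\right)\right| = \frac{1}{|\lambda|}\sup_{e' \in B_{\mathcal{D}}} |\mathrm{tr}(\Delta\rho\, e')| = \frac{1}{|\lambda|} d_{\mathcal{D}}(\rho_1,\rho_2).
\end{equation}
The map $e \mapsto |\lambda| e$ is a bijection between $B_{\lambda\mathcal{D}}$ and $B_{\mathcal{D}}$, so the two suprema run over corresponding sets. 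The identity therefore holds in $[0,+\infty]$, and it holds whether or not the supremum is attained. One can also note that if $e_o$ is an optimal element for $\mathcal{D}$, then $e_o/|\lambda|$ is an optimal element for $\lambda\mathcal{D}$.

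There is no substantive obstacle. The only points needing care are the absolute value $|\lambda|$, which covers negative $\lambda$, and the exclusion of $\lambda = 0$.
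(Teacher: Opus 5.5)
Your proof is correct and is essentially the paper's argument. Like the paper, you rescale the ball via $\|[\lambda\mathcal{D},\pi(e)]\|_{op}=\|[\mathcal{D},\pi(\lambda e)]\|_{op}$ and pull the factor out of the supremum; the only differences are that the paper rescales by $1/\lambda$ rather than $1/|\lambda|$ (equivalent, since $B$ is invariant under $e\mapsto -e$), and that you state the restriction $\lambda\neq 0$ explicitly, which the paper's proof also needs but leaves implicit.
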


\begin{proof}
For any real number $\lambda$, there are
\begin{equation}
  \|[\lambda \mathcal{D}, \pi (e)]\|_{op} = \|[\mathcal{D}, \lambda \pi
  (e)]\|_{op} = \|[\mathcal{D}, \pi (\lambda e)]\|_{op} .
\end{equation}
So if the element $e \in B$ in the spectral triple $(\mathcal{A}, \mathcal{H},
\mathcal{D})$, namely, $\|[\mathcal{D}, \pi (e)]\|_{op} \leqslant 1$, then we
have $\|[\lambda \mathcal{D}, \pi \left( \frac{1}{\lambda} e \right)]\|_{op}
\leqslant 1$, namely, $e' = \frac{1}{\lambda} e \in B'$ in the spectral triple
$(\mathcal{A}, \mathcal{H}, \lambda \mathcal{D})$; vice versa.

So for any states $\rho_1, \rho_2$, there is
\begin{eqnarray}
  d_{\lambda \mathcal{D}} (\rho_1, \rho_2) & = & \sup_{e' \in B'} |
  \mathrm{tr} (\Delta \rho \, e') | \nonumber\\
  & = & \sup_{e \in B} | \mathrm{tr} (\Delta \rho \cdot \frac{1}{\lambda} e)
  | \nonumber\\
  & = & \frac{1}{| \lambda |} \sup_{e \in B} | \mathrm{tr} (\Delta \rho
  \, e) | \nonumber\\
  & = & \frac{1}{| \lambda |} d_{\mathcal{D}} (\rho_1, \rho_2) . 
\end{eqnarray}
\end{proof}

\begin{lemma}\cite{Martinettip}
  Consider any two different states $\rho_1, \rho_2$, and $e_o$ is the
  corresponding optimal element, if the spectral distance is finite, then
  there should be $\|[\mathcal{D}, \pi (e_o)]\|_{op} > 0$.
\end{lemma}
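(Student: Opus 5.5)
We argue by contradiction, reusing the scaling trick from the proof of Lemma~\ref{d0}. Since $\rho_1 \neq \rho_2$, Lemma~\ref{d0} gives $d(\rho_1,\rho_2) > 0$. Because $e_o$ is optimal, $d(\rho_1,\rho_2) = \mathrm{tr}(\Delta\rho\, e_o)$, and therefore $\mathrm{tr}(\Delta\rho\, e_o) > 0$. Now suppose, contrary to the claim, that $\|[\mathcal{D},\pi(e_o)]\|_{op} = 0$.

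The first step is to rescale $e_o$. Using linearity of $\pi$ and of the commutator, for every real $c>0$ we have
\begin{equation*}
\|[\mathcal{D},\pi(c\,e_o)]\|_{op} = c\,\|[\mathcal{D},\pi(e_o)]\|_{op} = 0 \leqslant 1 ,
\end{equation*}
so $c\,e_o \in B$ for all $c>0$. The second step is to plug these elements into the supremum:
\begin{equation*}
d(\rho_1,\rho_2) = \sup_{e\in B}\mathrm{tr}(\Delta\rho\, e) \geqslant \mathrm{tr}(\Delta\rho\, c\,e_o) = c\, d(\rho_1,\rho_2), \qquad \forall c>0 .
\end{equation*}
Since $d(\rho_1,\rho_2)>0$, letting $c\to+\infty$ forces $d(\rho_1,\rho_2)=+\infty$. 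This contradicts the assumption that the distance is finite, so $\|[\mathcal{D},\pi(e_o)]\|_{op}>0$.

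The only point needing care is securing the strict positivity $\mathrm{tr}(\Delta\rho\, e_o)>0$. Without it, the scaling argument yields nothing, because $c\cdot 0 = 0$ for every $c$. Positivity comes from Lemma~\ref{d0} together with the convention that the supremum is attained at $e_o$ in its signed form $\sup_{e\in B}\mathrm{tr}(\Delta\rho\, e)$. Apart from that, the proof is just the homogeneity of the Lipschitz seminorm, exactly as in the proof of Lemma~\ref{d0}.

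As a remark, one can go further by normalizing: if $0<\|[\mathcal{D},\pi(e_o)]\|_{op}<1$, then $e_o/\|[\mathcal{D},\pi(e_o)]\|_{op}$ would give a strictly larger value. Hence any optimal element actually satisfies $\|[\mathcal{D},\pi(e_o)]\|_{op}=1$, although the statement only requires positivity.
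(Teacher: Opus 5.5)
Your proof is correct and takes essentially the paper's approach. You assume $\|[\mathcal{D},\pi(e_o)]\|_{op}=0$, observe that every positive multiple of $e_o$ lies in $B$, and use $\mathrm{tr}(\Delta\rho\, e_o)=d(\rho_1,\rho_2)>0$ from Lemma~\ref{d0} to reach a contradiction; the only difference is cosmetic, since the paper takes a single $t>1$ to contradict $e_o$ attaining the supremum, while you let $c\to+\infty$ to contradict finiteness.
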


\begin{proof}
Since $\rho_1, \rho_2$ are different states, from Lemma \ref{d0}, there is
$d (\rho_1, \rho_2) > 0$. For the optimal element $e_o$, if there is
$\|[\mathcal{D}, \pi (e_o)]\|_{op} = 0$, and
\begin{equation}
  0 < d (\rho_1, \rho_2) = \sup_{e \in B} [\mathrm{tr} (\Delta \rho
  \, e)] = \mathrm{tr} (\Delta \rho e_o) < + \infty,
\end{equation}
then there should be some real number $t > 1$,
\begin{equation}
  \|[\mathcal{D}, \pi (te_o)]\|_{op} = t \|[\mathcal{D}, \pi (e_o)]\|_{op} = 0
  \leqslant 1.
\end{equation}
So there is $te_o \in B$. But
\begin{equation}
  \mathrm{tr} (\Delta \rho te_o) = t \cdot \mathrm{tr} (\Delta \rho e_o) >
  \mathrm{tr} (\Delta \rho e_o) = \sup_{e \in B} [\mathrm{tr} (\Delta \rho
  \, e)],
\end{equation}
which is impossible. Therefore, if the spectral distance is finite, there should be
$\|[\mathcal{D}, \pi (e_o)]\|_{op} \neq 0$. So we have
$\|[\mathcal{D}, \pi (e_o)]\|_{op} > 0$ for two different states $\rho_1, \rho_2$.
\end{proof}

Obviously, if $\pi (\mathcal{A}) \subseteq \mathcal{C} (\mathcal{D})$, where
  $\mathcal{C} (\mathcal{D})$ is the centralizer of $\mathcal{D}$ in $L
  (\mathcal{H})$, then the Connes spectral distances between any states $\rho_1, \rho_2$ are $d (\rho_1, \rho_2) =
  0$ or $+ \infty$.
The trivial case is that $\mathcal{D}= c\mathbb{I}$. In the present work, we
only study the cases that the Connes spectral distances are finite.

\begin{theorem}\cite{Iochum}
  If the spectral distance of two different states $\rho_1, \rho_2$ is finite,
  then for the corresponding optimal elements $e_o$, there is
  \begin{equation}
    \|[\mathcal{D}, \pi (e_o)]\|_{op} = 1.
  \end{equation}
\end{theorem}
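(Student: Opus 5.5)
The plan is a rescaling argument in the spirit of the proof of the previous lemma, which already gives $\|[\mathcal{D}, \pi (e_o)]\|_{op} > 0$. Since $e_o \in B$, we also have $\|[\mathcal{D}, \pi (e_o)]\|_{op} \leqslant 1$. So it remains to rule out
\begin{equation}
  0 < \|[\mathcal{D}, \pi (e_o)]\|_{op} = s < 1 .
\end{equation}

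Suppose this holds and set $e' = e_o / s$. By linearity of $\pi$ and of the commutator, $\|[\mathcal{D}, \pi (e')]\|_{op} = \frac{1}{s} \|[\mathcal{D}, \pi (e_o)]\|_{op} = 1$, so $e' \in B$. Then
\begin{equation}
  \mathrm{tr} (\Delta \rho \, e') = \frac{1}{s} \mathrm{tr} (\Delta \rho \, e_o) = \frac{1}{s} d (\rho_1, \rho_2) .
\end{equation}
By Lemma \ref{d0}, $d (\rho_1, \rho_2) > 0$, and by hypothesis it is finite. Hence $\frac{1}{s} d (\rho_1, \rho_2) > d (\rho_1, \rho_2)$, which contradicts $d (\rho_1, \rho_2) = \sup_{e \in B} \mathrm{tr} (\Delta \rho \, e)$. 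Therefore $s = 1$.

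The only delicate point is making sure every hypothesis used is in place. The strict inequality needs $d > 0$, which uses that the states are different, via Lemma \ref{d0}. It also needs $d < +\infty$, so that multiplying by $1/s > 1$ genuinely increases the value. Finally, we rely on the standing assumption of the paper that the supremum is attained by some $e_o$. With these checked, the argument is a two-line contradiction and I expect no real obstacle.
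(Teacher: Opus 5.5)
Your proposal is correct and follows essentially the same argument as the paper: you get $\|[\mathcal{D},\pi(e_o)]\|_{op}>0$ from the preceding lemma, then replace $e_o$ by $e_o/t$ with $0<t<1$ to exceed the supremum and reach a contradiction. You also make explicit that the strict inequality relies on $d(\rho_1,\rho_2)>0$ from Lemma~\ref{d0}, which the paper uses without stating it.
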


\begin{proof}
Since the spectral distance is finite, for the
optimal element $e_o$, there should be $\|[\mathcal{D}, \pi (e_o)]\|_{op} >
0$. If there are
\begin{equation}
  0 < \|[\mathcal{D}, \pi (e_o)]\|_{op} = t < 1,
\end{equation}
then we also have
\begin{equation}
  \left\| \left[ \mathcal{D}, \pi \left( \frac{e_o}{t} \right) \right]
  \right\|_{op} = \|[\mathcal{D}, \pi (e_o')]\|_{op} = \frac{1}{t}
  \|[\mathcal{D}, \pi (e_o)]\|_{op} = 1.
\end{equation}
So there is $e_o' = \frac{e_o}{t} \in B$, but
\begin{eqnarray}
  \mathrm{tr} (\Delta \rho e_o') & = & \mathrm{tr} (\Delta \rho \frac{e_o}{t})
  \nonumber\\
  & = & \frac{1}{t} \mathrm{tr} (\Delta \rho e_o) \nonumber\\
  & > & \mathrm{tr} (\Delta \rho e_o) \nonumber\\
  & = & \sup_{e \in B} [\mathrm{tr} (\Delta \rho \, e)]
  \nonumber\\
  & = & d (\rho_1, \rho_2) . 
\end{eqnarray}
This is a contradiction. So for any optimal element $e_o$ of different states,
there should be $\|[\mathcal{D}, \pi (e_o)]\|_{op} =1$.
\end{proof}

Since for a given state $\rho$, any element $e_o$ can be consider as the
corresponding optimal element,
\begin{equation}
  d (\rho, \rho) = \mathrm{tr} (\Delta \rho e_o) = 0.
\end{equation}
Therefore, one can just simply choose the optimal element $e_o$ which
satisfies the relation $\|[\mathcal{D}, \pi (e_o)]\|_{op} = 1$.

\begin{definition}
  Let $(\mathcal{A}, \mathcal{H}, \mathcal{D})$ be a spectral triple with
  finite spectral distances, and denote the set of all optimal elements as
  $B_o$. Obviously, there is
  \begin{equation}
    \|[\mathcal{D}, \pi (e)]\|_{op} = 1, \quad \forall e \in B_o .
  \end{equation}
\end{definition}

We also define the set
\begin{equation}
  B_1 = \{e \in \mathcal{A}: \|[\mathcal{D}, \pi (e)]\|_{op} = 1\} .
\end{equation}
Obviously, there is
\begin{equation}
  B_o \subseteq B_1 .
\end{equation}
In the present work, we only consider the spectral triple with finite spectral
distances. So in most of the cases, one can only consider the subset $B_o$ or
$B_1$ instead of $B$.

Actually, we also have the following results.

\begin{lemma}\cite{Martinettip}
  Let $\mathcal{C} (\mathcal{D}) \subset L (\mathcal{H})$ be the centralizer
  of $\mathcal{D}$, if Connes spectral distances in the spectral triples
  $(\mathcal{A}, \mathcal{H}, \mathcal{D})$ are finite, then the subset $\pi
  (\mathcal{A}) \cap \mathcal{C} (\mathcal{D})$ is trivial, namely, $\pi
  (\mathcal{A}) \cap \mathcal{C} (\mathcal{D}) = \{\lambda \mathbb{I}\}$.
\end{lemma}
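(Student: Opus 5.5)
We argue by contradiction. Suppose the spectral distances are finite, but $\pi(\mathcal{A}) \cap \mathcal{C}(\mathcal{D})$ contains an element $\pi(a)$ with $\pi(a) \neq \lambda \mathbb{I}$ for every $\lambda$. Then $[\mathcal{D},\pi(a)] = 0$. Since $\mathcal{D}$ is self-adjoint and $\pi$ is a $*$-representation, $\pi(a^\dagger)=\pi(a)^\dagger$ also commutes with $\mathcal{D}$. Hence the Hermitian elements $h_1 = (a+a^\dagger)/2$ and $h_2 = (a-a^\dagger)/(2i)$ satisfy $[\mathcal{D},\pi(h_j)]=0$. At least one of them, call it $h$, is not a multiple of the identity, because otherwise $a = h_1 + i h_2$ would be one. (Here the representation is taken unital, so that $\pi(\lambda\mathbb{I}) = \lambda\mathbb{I}$.) Using Lemma \ref{tloe}-type reasoning, we may subtract $\frac{1}{n}\mathrm{tr}(h)\mathbb{I}$ and assume that $h$ is Hermitian, traceless and nonzero.

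Next we build two distinct states that $h$ separates. Write $h = \sum_k \mu_k P_k$ in spectral form. Since $h \neq 0$ and $\mathrm{tr}(h)=0$, it has at least two distinct eigenvalues. Choose unit eigenvectors $v_1, v_2$ with eigenvalues $\mu_1 > \mu_2$, and set $\rho_1 = |v_1\rangle\langle v_1|$ and $\rho_2 = |v_2\rangle\langle v_2|$. These are valid density matrices and they are different, and
\begin{equation}
  \mathrm{tr}(\Delta\rho\, h) = \mu_1 - \mu_2 > 0 .
\end{equation}
Alternatively, one may take $\rho_{1,2} = \frac{1}{n}(\mathbb{I} \pm \epsilon h)$ with $\epsilon>0$ small enough that both are positive. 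Then $\mathrm{tr}(\Delta\rho\, h) = \frac{2\epsilon}{n}\|h\|^2>0$.

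Finally, we run the scaling argument from the proof of Lemma \ref{d0}. For every $c>0$ we have $\|[\mathcal{D},\pi(ch)]\|_{op} = 0 \le 1$, so $ch \in B$. Therefore
\begin{equation}
  d(\rho_1,\rho_2) \geqslant c\,\mathrm{tr}(\Delta\rho\, h)
\end{equation}
for all $c>0$, and so $d(\rho_1,\rho_2)=+\infty$. This contradicts the assumed finiteness. Hence $\pi(\mathcal{A})\cap\mathcal{C}(\mathcal{D})$ consists only of scalars. Conversely, every $\lambda\mathbb{I} = \pi(\lambda\mathbb{I})$ trivially commutes with $\mathcal{D}$, which gives the equality $\pi(\mathcal{A}) \cap \mathcal{C}(\mathcal{D}) = \{\lambda \mathbb{I}\}$.

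The main obstacle is the first step: passing from an arbitrary non-scalar element of the intersection to a Hermitian, traceless, non-scalar preimage $h$. Two points need care. First, $\pi$ must be unital, so that "scalar in $\pi(\mathcal{A})$" matches "scalar in $\mathcal{A}$". Second, $\pi$ should be injective, which is automatic for $\mathbb{M}_n(\mathbb{C})$ since the algebra is simple and $\pi$ is nonzero. Then $\pi(h)$ non-scalar implies $h$ non-scalar, and the construction of the separating states goes through.
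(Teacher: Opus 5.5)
Your proof is correct and follows the same route as the paper. A non-scalar element of $\pi(\mathcal{A})\cap\mathcal{C}(\mathcal{D})$ has zero Lipschitz seminorm, so all its positive multiples lie in $B$, and any pair of states it separates then has infinite distance; you make explicit what the paper leaves implicit (the concrete separating states, and the need for $\pi$ to be unital to pass from $\pi(h)$ non-scalar to $h$ non-scalar), while your Hermitian reduction and injectivity remark are harmless but unnecessary, since the paper works with $|\mathrm{tr}(\Delta\rho\, e)|$ for arbitrary $e$ and unitality alone gives the implication you use.
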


\begin{proof}
If the subset $\pi (\mathcal{A}) \cap \mathcal{C}
(\mathcal{D})$ is non-trivial, namely, there is some element $e \in
\mathcal{A}$ with $e \neq \lambda \mathbb{I}$ and $\pi (e) \in \mathcal{C}
(\mathcal{D})$. So we have $\|[\mathcal{D}, \pi (e)]\|_{op} = 0$, and $e \in
B$. By the separating property of states, there do exist some states $\rho_1,
\rho_2$ with $\mathrm{tr} (\rho_1  \, e) \neq \mathrm{tr} (\rho_2
\, e)$. Therefore, we have
\begin{equation}
  | \mathrm{tr} (\rho_1  \, e) - \mathrm{tr} (\rho_2 
  \, e) | = | \mathrm{tr} (\Delta \rho \, e) | =
  \delta > 0.
\end{equation}
Obviously, for any real number $k > 0$, we also have $\|[\mathcal{D}, \pi
(ke)]\|_{op} = 0$. So $e' = ke \in B$, but
\begin{equation}
  d (\rho_1, \rho_2) = \sup_{e\in B} | \mathrm{tr} (\Delta \rho e) |
  \geqslant | \mathrm{tr} (\Delta \rho ke) | = k \delta .
\end{equation}
So there should be $d (\rho_1, \rho_2) = + \infty$.

Therefore, if the spectral distances are finite, then the subset $\pi
(\mathcal{A}) \cap \mathcal{C} (\mathcal{D})$ is
trivial.
\end{proof}

\begin{corollary}
  If $\mathcal{A}=\mathbb{M}_n (\mathbb{C})$, $\mathcal{H}=\mathbb{C}^n$, $n
  \geqslant 2$, $\mathcal{D} \in \mathbb{M}_n (\mathbb{C})$, and the
  representation $\pi (a) = a \in \mathbb{M}_n (\mathbb{C})$, then the Connes
  spectral distances in the spectral triple $(\mathcal{A}, \mathcal{H},
  \mathcal{D})$ can not all be finite.
\end{corollary}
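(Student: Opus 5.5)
Since the corollary concerns the faithful representation $\pi(a)=a$ on $\mathcal{H}=\mathbb{C}^n$, the Dirac operator $\mathcal{D}$ itself lies in $\pi(\mathcal{A})=\mathbb{M}_n(\mathbb{C})$. The plan is to show that $\pi(\mathcal{A})\cap\mathcal{C}(\mathcal{D})$ is nontrivial whenever $n\geqslant 2$, and then invoke the preceding lemma (from \cite{Martinettip}): if all Connes spectral distances were finite, this intersection would have to be $\{\lambda\mathbb{I}\}$.

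We split into two cases according to whether $\mathcal{D}$ is a multiple of the identity. If $\mathcal{D}=c\mathbb{I}$, every element commutes with $\mathcal{D}$, so $\pi(\mathcal{A})\subseteq\mathcal{C}(\mathcal{D})$. Since $n\geqslant 2$, this intersection is all of $\mathbb{M}_n(\mathbb{C})$, which is strictly larger than $\{\lambda\mathbb{I}\}$. If $\mathcal{D}\neq c\mathbb{I}$, take $e=\mathcal{D}$ itself. Then $e\in\mathcal{A}$, $\pi(e)=\mathcal{D}$ commutes with $\mathcal{D}$, and $e$ is not a scalar multiple of $\mathbb{I}$. So again $\pi(\mathcal{A})\cap\mathcal{C}(\mathcal{D})\neq\{\lambda\mathbb{I}\}$. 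One could also use $\mathcal{D}^2$ or any polynomial in $\mathcal{D}$, but $\mathcal{D}$ alone suffices. Self-adjointness of $\mathcal{D}$ means $e$ is Hermitian, which fits the paper's convention of Hermitian elements, though this is not needed.

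In both cases the lemma's conclusion fails, so its hypothesis must fail: not all spectral distances are finite. For concreteness, the lemma's own proof supplies the witness states. Because $e$ is non-scalar, some pair $\rho_1,\rho_2$ has $\mathrm{tr}(\Delta\rho\,e)\neq 0$; for instance, when $e=\mathcal{D}$, take the eigenprojections of two distinct eigenvalues of $\mathcal{D}$. Then $ke\in B$ for every $k>0$, and so $d(\rho_1,\rho_2)=+\infty$.

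There is no real obstacle here; the only point needing care is the case distinction, which ensures the element we exhibit is genuinely non-scalar. The hypothesis $n\geqslant 2$ is used exactly to guarantee that $\mathbb{M}_n(\mathbb{C})\neq\mathbb{C}\mathbb{I}$ in the scalar case, and that there are at least two distinct states to separate.
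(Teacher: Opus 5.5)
Your proof is correct and follows the same route as the paper: you exhibit a non-scalar element of $\pi(\mathcal{A})\cap\mathcal{C}(\mathcal{D})$ and invoke the preceding lemma. Your case split (every matrix commutes when $\mathcal{D}=c\mathbb{I}$; otherwise $\mathcal{D}$ itself is the witness) supplies the justification that the paper's one-line proof leaves implicit, though for the witness states you should take normalized eigenprojections, e.g.\ rank-one projectors onto eigenvectors, so that they have trace $1$.
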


\begin{proof}
For any matrix $\mathcal{D}$, the centralizer
$\mathcal{C} (\mathcal{D}) \subseteq \pi (\mathcal{A}) =\mathbb{M}_n
(\mathbb{C})$ is non-trivial. So the subset $\pi (\mathcal{A}) \cap
\mathcal{C} (\mathcal{D})$ is non-trivial, and then Connes spectral distances
in the spectral triple $(\mathcal{A}, \mathcal{H}, \mathcal{D})$ can not all
be finite.
\end{proof}

In Ref.~\cite{Martinettip}, there is also a similar conclusion to the result above.

For example, consider the spectral triple $(\mathbb{M}_2 (\mathbb{C}),
\mathbb{C}^2, \mathcal{D})$ with the Dirac operator
\begin{equation}
  \mathcal{D} = \left(\begin{array}{cc}
    0 & 1\\
    1 & 0
  \end{array}\right),
\end{equation}
and the representation $\pi (a) = a \in \mathbb{M}_2 (\mathbb{C})$. For the element
\begin{equation}
  e_1 = \left(\begin{array}{cc}
    0 & 1\\
    1 & 0
  \end{array}\right) \in \mathbb{M}_2 (\mathbb{C}),
\end{equation}
there is $\|[\mathcal{D}, \pi (e_1)]\|_{op} = 0$, and $k e_1 \in B$ for
any real number $k > 0$.

Consider two states
\begin{equation}
  \rho_1 = |1 \rangle \langle 1| = \left(\begin{array}{cc}
    1 & 0\\
    0 & 0
  \end{array}\right), \quad \rho_2 = | + \rangle \langle + | = \frac{1}{2}
  \left(\begin{array}{cc}
    1 & 1\\
    1 & 1
  \end{array}\right),
\end{equation}
\begin{equation}
  \quad \Delta \rho = \rho_1 - \rho_2 = \frac{1}{2} \left(\begin{array}{cc}
    1 & - 1\\
    - 1 & - 1
  \end{array}\right),
\end{equation}
there is
\begin{equation}
  d (\rho_1, \rho_2) = \sup_{e \in B} | \mathrm{tr} (\Delta \rho e) |
  \geqslant | \mathrm{tr} (\Delta \rho k e_1) | = k | \mathrm{tr} (\Delta \rho
  e_1) | = k.
\end{equation}
So we have $d (\rho_1, \rho_2) = + \infty$.

\section{Unitary invariance of Connes spectral distances of quantum states}\label{sec4}

In quantum physics, most of the properties of physical systems are invariant
under unitary transformations. Connes spectral distances are also essential
physical properties of quantum states. Therefore, it is interesting to study
some types of Connes spectral distances which are unitary invariant. The
unitary invariance of spectral distances can be regarded as a natural property
of a spectral triple. This is also a special kind of isometry of the state space being studied in some other literature, such as Ref.~\cite{Martinettip}.

\begin{definition}
  Assume that the spectral distances in the spectral triple $(\mathcal{A},
  \mathcal{H}, \mathcal{D})$ are all finite. For any states $\rho_1, \rho_2
  \in \mathcal{A}$ and any unitary operator $U \in \mathcal{U} (\mathcal{A})$,
  denote $\rho_1' = U \rho_1 U^{\dag}$, $\rho_2' = U \rho_2 U^{\dag}$. If the
  spectral distances satisfy
  \begin{equation}
    d (\rho_1', \rho_2') = d (\rho_1, \rho_2),
  \end{equation}
  we say that the spectral distances in the spectral triple $(\mathcal{A},
  \mathcal{H}, \mathcal{D})$ are \textit{unitary invariant}.
\end{definition}

Obviously, if $\mathcal{A}$ is commutative, we have $\rho_i' = \rho_i$, and
the spectral distances are unitary invariant.

First, let us consider the sets $B$ and $B_o, B_1$ under unitary
transformations. Usually, the set $B$ is not necessarily closed under unitary
conjugations, namely, if $e \in B$, then $UeU^{\dag} \in B$, where $U \in
\mathcal{U} (\mathcal{A})$ is any unitary operator. Similarly, the sets $B_o,
B_1$ are also not necessarily closed under unitary conjugations.

\begin{lemma}
  Consider a spectral triple $(\mathcal{A}, \mathcal{H}, \mathcal{D})$ with
  finite Connes spectral distances, if the optimal element set $B_o$ is closed
  under unitary conjugations, then the Connes spectral distances of the states
  are unitary invariant.
\end{lemma}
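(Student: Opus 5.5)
The plan is to compare the two suprema directly, using cyclicity of the trace to move the unitary from the states onto the test element, and then to invoke closure of $B_o$ under conjugation. Write $\Delta\rho' = \rho_1'-\rho_2' = U\Delta\rho\, U^{\dag}$. For any $e\in\mathcal{A}$ we have
\begin{equation}
  \mathrm{tr}(\Delta\rho'\, e) = \mathrm{tr}(U\Delta\rho\, U^{\dag} e) = \mathrm{tr}(\Delta\rho\, U^{\dag} e U).
\end{equation}
So the problem reduces to knowing which conjugates of admissible elements are still admissible.

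\textbf{Step 1: show $d(\rho_1',\rho_2')\ge d(\rho_1,\rho_2)$.}
\begin{itemize}
\item If $\rho_1=\rho_2$, then $\rho_1'=\rho_2'$, and both distances vanish by Lemma \ref{d0}. So assume $\rho_1\neq\rho_2$.
\item Let $e_o\in B_o$ be an optimal element for $(\rho_1,\rho_2)$. By hypothesis $Ue_oU^{\dag}\in B_o\subseteq B_1\subseteq B$.
\item Hence
\begin{equation}
  d(\rho_1',\rho_2') \ge \mathrm{tr}(\Delta\rho'\, Ue_oU^{\dag}) = \mathrm{tr}(\Delta\rho\, e_o) = d(\rho_1,\rho_2).
\end{equation}
\end{itemize}

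\textbf{Step 2: show the reverse inequality by symmetry.}
\begin{itemize}
\item Since $U^{\dag}$ is also a unitary in $\mathcal{U}(\mathcal{A})$ and $\rho_i = U^{\dag}\rho_i' U$, apply Step 1 with the pair $(\rho_1',\rho_2')$ and the unitary $U^{\dag}$.
\item Take an optimal element $e_o'$ for $(\rho_1',\rho_2')$; by hypothesis $U^{\dag}e_o'U\in B_o\subseteq B$.
\item This gives $d(\rho_1,\rho_2)\ge \mathrm{tr}(\Delta\rho\, U^{\dag}e_o'U)=\mathrm{tr}(\Delta\rho'\, e_o') = d(\rho_1',\rho_2')$.
\end{itemize}
Combining the two inequalities yields equality.

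The main subtlety is interpretive rather than computational: both directions need the conjugated optimal element to lie in $B$. That is why I route it through $B_o\subseteq B_1\subseteq B$, using the fact that every element of $B_o$ has Lipschitz seminorm exactly $1$. I also use the paper's standing assumptions that the suprema are attained (so optimal elements exist for both pairs) and that the distances are finite.
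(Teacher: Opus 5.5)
Your proof is correct and is essentially the paper's argument. The paper writes $d(\rho_1,\rho_2)$ as a supremum over $B_o$ and reindexes it through the bijection $e\mapsto UeU^{\dag}$ of $B_o$ together with cyclicity of the trace; your two inequalities with explicit optimal elements carry out the same reasoning and also make explicit the step the paper leaves implicit, namely that the supremum over $B$ can be replaced by one over $B_o\subseteq B$.
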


\begin{proof}
If the optimal element set $B_o$ is closed under
unitary conjunctions, namely, $e_o \in B_o$ if and only if $Ue_o U^{\dag} \in
B_o$, where $U \in \mathcal{U} (\mathcal{A})$ is any unitary operator, then
for any state $\rho_1, \rho_2$, and $\rho_i' = U \rho_i U^{\dag}$, $\Delta
\rho' = U \Delta \rho U^{\dag}$, there are
\begin{eqnarray}
  d (\rho_1, \rho_2) & = & \sup_{e \in B_o} | \mathrm{tr} (\Delta \rho e) |
  \nonumber\\
  & = & \sup_{UeU^{\dag} \in B_o} | \mathrm{tr} (\Delta \rho e) | \nonumber\\
  & = & \sup_{UeU^{\dag} \in B_o} | \mathrm{tr} (U \Delta \rho U^{\dag} \cdot
  UeU^{\dag}) | \nonumber\\
  & = & \sup_{e' \in B_o} | \mathrm{tr} (\Delta \rho' e') | \nonumber\\
  & = & d (\rho_1', \rho_2') . 
\end{eqnarray}
So the Connes spectral distances are unitary
invariant.
\end{proof}

Similarly, one can prove the following result.

\begin{lemma}
  If the set $B_1$ (or $B$) is closed under unitary conjugations, then the
  Connes spectral distances are unitary invariant.
\end{lemma}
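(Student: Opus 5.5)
The plan is to mirror the argument of the preceding lemma, with the optimal-element set $B_o$ replaced by $B_1$ or by $B$. We assume that for every unitary $U \in \mathcal{U}(\mathcal{A})$, $e \in B_1$ if and only if $U e U^{\dag} \in B_1$ (and the same for $B$). The key identity is the cyclicity of the trace:
\begin{equation}
  \mathrm{tr}(\Delta\rho\, e) = \mathrm{tr}(U\Delta\rho U^{\dag}\, U e U^{\dag}) = \mathrm{tr}(\Delta\rho'\, e'), \qquad e' = UeU^{\dag}.
\end{equation}

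For the set $B$ the argument is immediate. Since $e \mapsto UeU^{\dag}$ is a bijection of $\mathcal{A}$ with inverse $e \mapsto U^{\dag} e U$, closure under conjugation by all unitaries (including $U^{\dag}$) gives $UBU^{\dag} = B$. Taking suprema of both sides of the identity over $e \in B$ then yields $d(\rho_1,\rho_2) = \sup_{e' \in UBU^{\dag}} |\mathrm{tr}(\Delta\rho' e')| = d(\rho_1',\rho_2')$.

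For $B_1$ we first reduce the supremum over $B$ to one over $B_1$, using the results of Sec.~\ref{sec3} and the standing assumption that distances are finite.
\begin{itemize}
\item If $\rho_1 = \rho_2$, then $\rho_1' = \rho_2'$ as well, and both distances vanish by Lemma~\ref{d0}.
\item Otherwise, the theorem quoted from \cite{Iochum} shows that every optimal element lies in $B_1$. Hence $d(\rho_1,\rho_2) = \sup_{e\in B_1}|\mathrm{tr}(\Delta\rho\, e)|$, because $B_1 \subseteq B$ and the supremum over $B$ is attained on $B_1$.
\item If one prefers not to rely on attainment, we argue directly. Any $e \in B$ with $0 < \|[\mathcal{D},\pi(e)]\|_{op} = t < 1$ can be rescaled to $e/t \in B_1$ without decreasing $|\mathrm{tr}(\Delta\rho\, e)|$. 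Any $e$ with $t = 0$ must satisfy $\mathrm{tr}(\Delta\rho\, e) = 0$, since otherwise the distance would be infinite by the argument of Lemma~\ref{d0}.
\end{itemize}
The same reduction applies to the pair $\rho_1', \rho_2'$, so both distances are suprema over $B_1$. The conjugation-invariance argument used for $B$ then goes through verbatim with $B_1$ in place of $B$.

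The main obstacle is this reduction from $B$ to $B_1$, in particular the handling of elements with vanishing Lipschitz seminorm, which requires the finiteness hypothesis. Once the suprema are expressed over the invariant set, the rest is the trace identity above.
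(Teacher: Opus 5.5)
Your proposal is correct and is essentially the paper's argument: the paper proves this lemma only by saying it follows ``similarly'' to the $B_o$ case, i.e.\ by the same chain of equalities based on $\mathrm{tr}(\Delta\rho\, e)=\mathrm{tr}(U\Delta\rho U^{\dag}\, UeU^{\dag})$, with $B_o$ replaced by $B_1$ or $B$. Your explicit reduction of the supremum over $B$ to one over $B_1$ (using finiteness of the distances and the fact that optimal elements have Lipschitz seminorm $1$) makes precise a step the paper only asserts informally, when it says one may use $B_o$ or $B_1$ instead of $B$.
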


\begin{corollary}
  If the set $B_1$ (or $B$) is closed under unitary conjugations, so is the
  optimal element set $B_o$.
\end{corollary}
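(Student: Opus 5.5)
We need to show: if $B_1$ (or $B$) is closed under unitary conjugations, then $B_o$ is closed under unitary conjugations. The plan is to characterize optimal elements as maximizers of a linear functional over $B$, and then transport maximizers along the conjugation. Fix a unitary $U \in \mathcal{U}(\mathcal{A})$ and an optimal element $e_o \in B_o$, say optimal for the pair $\rho_1,\rho_2$. I will show that $U e_o U^{\dag}$ is optimal for the conjugated pair $\rho_1' = U\rho_1U^{\dag}$, $\rho_2' = U\rho_2U^{\dag}$, so that $Ue_oU^{\dag}\in B_o$. Applying the same argument with $U^{\dag}$ gives the converse inclusion, hence $U B_o U^{\dag} = B_o$.

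First I treat the case where $B$ is closed under conjugation. Then $e\mapsto UeU^{\dag}$ is a bijection of $B$ onto itself, with inverse given by conjugation by $U^{\dag}$. By cyclicity of the trace, $\mathrm{tr}(\Delta\rho'\,UeU^{\dag}) = \mathrm{tr}(\Delta\rho\, e)$. Therefore
\begin{equation}
  d(\rho_1',\rho_2') = \sup_{e'\in B}\mathrm{tr}(\Delta\rho' e') = \sup_{e\in B}\mathrm{tr}(\Delta\rho\, e) = \mathrm{tr}(\Delta\rho\, e_o) = \mathrm{tr}(\Delta\rho'\, Ue_oU^{\dag}),
\end{equation}
with $Ue_oU^{\dag}\in B$. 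So the supremum for the primed pair is attained at $Ue_oU^{\dag}$, which is therefore an optimal element.

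Next I treat the case where only $B_1$ is assumed closed, and reduce it to the previous one. The key observation is that the supremum over $B$ equals the supremum over $B_1$ whenever distances are finite. Indeed, by the theorem of \cite{Iochum} proved above, any optimal element lies in $B_1$ (the trivial case $\rho_1=\rho_2$ is handled by choosing $e_o\in B_1$, as remarked in the text). Hence $\sup_B = \sup_{B_1}$ and the maximum is attained in $B_1$. The previous argument then goes through verbatim with $B_1$ in place of $B$: conjugation permutes $B_1$, the functional is transported by cyclicity, and $Ue_oU^{\dag}\in B_1\subseteq B$ attains $\sup_{B_1} = \sup_B$ for the conjugated pair. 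It also follows that $d(\rho_1',\rho_2')=d(\rho_1,\rho_2)$, consistent with the preceding lemma.

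The main obstacle is this second case. Closure of $B_1$ does not obviously imply closure of $B$, so one must justify replacing $B$ by $B_1$. The justification rests on the finiteness assumption together with the theorem that $\|[\mathcal{D},\pi(e_o)]\|_{op}=1$, and on the convention adopted in the paper that suprema are attained. Within that framework, the remaining steps are routine trace manipulations.
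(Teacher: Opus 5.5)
Your proposal is correct and follows essentially the paper's route. Conjugation by $U$ transports the optimal element, and cyclicity of the trace plus invariance of the distance show that $Ue_oU^{\dag}$ attains the supremum for the conjugated pair; the paper takes that invariance from the preceding lemma, while you rederive it from the bijection $e\mapsto UeU^{\dag}$ of $B$ or $B_1$. The obstacle you flag is milder than you suggest: by homogeneity of $L(e)=\|[\mathcal{D},\pi(e)]\|_{op}$, closure of $B_1$ under all conjugations already forces $L(UeU^{\dag})=L(e)$ and hence closure of $B$, so your two cases coincide.
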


\begin{proof}Since the set $B_1$ is closed under unitary
conjugation, the Connes spectral distances are unitary invariant. For any
optimal element $e_o \in B_o$ and any unitary operator $U$, there are some
states $\rho_1, \rho_2$ corresponding to $e_o$, and $\rho_i' = U \rho_i
U^{\dag}$,
\begin{eqnarray}
  \mathrm{tr} (\Delta \rho e_o) & = & d (\rho_1, \rho_2) \nonumber\\
  & = & \mathrm{tr} (U \Delta \rho U^{\dag} \cdot Ue_o U^{\dag}) \nonumber\\
  & = & \mathrm{tr} (\Delta \rho' e_o') . \nonumber\\
  & = & d (\rho_1', \rho_2') \nonumber\\
  & = & \sup_{e \in B_1} [\mathrm{tr} (\Delta \rho' e)] . 
\end{eqnarray}
Since $e_o \in B_o \subseteq B_1$, we have $Ue_o U^{\dag} = e_o' \in B_1$. So
$Ue_o U^{\dag} = e_o'$ is the corresponding optimal element of $\rho_1',
\rho_2'$. This means that $Ue_o U^{\dag} = e_o' \in B_o$. Therefore, the
optimal element set $B_o$ is closed under unitary
conjunctions.
\end{proof}

\begin{theorem}\label{lu}
  If the operator norms $\|[\mathcal{D}, \pi (e)]\|_{op}$ are invariant under
  any unitary transformations of any elements $e$, namely, for any unitary
  operator $U \in \mathcal{U} (\mathcal{A})$, there is $\|[\mathcal{D}, \pi
  (e)]\|_{op} = \|[\mathcal{D}, \pi (UeU^{\dag})]\|_{op}$, then the Connes
  spectral distances are unitary invariant.
\end{theorem}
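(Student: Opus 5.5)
The plan is to show that the hypothesis makes the ball $B$ closed under unitary conjugation, and then apply the preceding lemma: if $B$ (or $B_1$) is closed under unitary conjugations, the Connes spectral distances are unitary invariant.

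\emph{Step 1: closure of $B$.} Fix a unitary $U \in \mathcal{U}(\mathcal{A})$ and an element $e \in B$, so $\|[\mathcal{D}, \pi(e)]\|_{op} \leqslant 1$. By hypothesis,
\begin{equation*}
\|[\mathcal{D}, \pi(UeU^{\dag})]\|_{op} = \|[\mathcal{D}, \pi(e)]\|_{op} \leqslant 1,
\end{equation*}
hence $UeU^{\dag} \in B$. The converse uses $U^{\dag}$, which is also unitary in $\mathcal{A}$: if $UeU^{\dag} \in B$, then $e = U^{\dag}(UeU^{\dag})U \in B$. Thus $e \mapsto UeU^{\dag}$ is a bijection of $B$ onto itself. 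The identical argument with equality in place of $\leqslant$ shows the same for $B_1$.

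\emph{Step 2: invariance of the distance.} For states $\rho_1, \rho_2$ set $\rho_i' = U\rho_i U^{\dag}$, so $\Delta\rho' = U\Delta\rho U^{\dag}$. Cyclicity of the trace gives
\begin{equation*}
\mathrm{tr}(\Delta\rho' \, UeU^{\dag}) = \mathrm{tr}(U\Delta\rho\, e\, U^{\dag}) = \mathrm{tr}(\Delta\rho\, e).
\end{equation*}
Reindexing the supremum through the bijection of Step 1 then yields
\begin{equation*}
d(\rho_1', \rho_2') = \sup_{e' \in B} |\mathrm{tr}(\Delta\rho' e')| = \sup_{e \in B} |\mathrm{tr}(\Delta\rho' \, UeU^{\dag})| = \sup_{e \in B} |\mathrm{tr}(\Delta\rho\, e)| = d(\rho_1, \rho_2).
\end{equation*}
This works directly with $B$, so it does not need the assumption that the supremum is attained or any appeal to optimal elements.

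\emph{Main obstacle.} The only delicate point is that the reindexing is over all of $B$. This requires surjectivity of conjugation on $B$, not merely the inclusion $UBU^{\dag} \subseteq B$. Surjectivity is exactly what the $U^{\dag}$ argument in Step 1 supplies, so I do not expect a genuine difficulty.
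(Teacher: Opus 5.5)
Your proof is correct and takes essentially the paper's route. The paper notes that the hypothesis makes $B$, $B_1$ and $B_o$ closed under unitary conjugation and then invokes the preceding lemmas, whose content is exactly the trace-cyclicity reindexing of the supremum that you carry out explicitly in Step 2 (working with $B$ directly, as you do, also avoids relying on attained suprema or optimal elements).
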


\begin{proof}If $\|[\mathcal{D}, \pi (e)]\|_{op}$ are invariant
under any unitary transformations of the elements $e$, then the sets $B$,
$B_1$ and $B_o$ are all closed under unitary conjugations. Therefore, the
Connes spectral distances are unitary invariant.
\end{proof}

This also means that the Lipschitz seminorm $L (a) = \|[\mathcal{D}, \pi
(a)]\|_{op}$ is invariant under unitary conjugation.
One simple case is $[\mathcal{D}, \pi (U)]=0$, and this is also studied in Ref.~\cite{Martinettip}.

Furthermore, if the representation $\pi$ is \textit{unital}, namely,
$\pi (\mathbb{I}) =\mathbb{I}$, we have $\pi (UU^{\dag}) = \pi (\mathbb{I}) =
\pi (U) \pi (U^{\dag}) = \pi (U) \pi (U)^{\dag} =\mathbb{I}$. In this case,
$\pi (U)$ is also unitary. We have the following result.

\begin{corollary}
  If for any unitary operator $U$, there is $\|[\pi (U)\mathcal{D} \pi
  (U^{\dag}), \pi (e)]\|_{op} = \|[\mathcal{D}, \pi (e)]\|_{op}$, then the
  Connes spectral distances are unitary invariant.
\end{corollary}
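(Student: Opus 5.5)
The plan is to reduce the corollary to Theorem~\ref{lu}. That is, we show that the hypothesis $\|[\pi(U)\mathcal{D}\pi(U^{\dag}),\pi(e)]\|_{op}=\|[\mathcal{D},\pi(e)]\|_{op}$ for all unitaries $U$ and all $e$ implies $\|[\mathcal{D},\pi(UeU^{\dag})]\|_{op}=\|[\mathcal{D},\pi(e)]\|_{op}$. Once this holds, Theorem~\ref{lu} gives unitary invariance of the Connes spectral distances directly.

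The key step is a conjugation identity, which relies on unitality of $\pi$. As remarked just before the corollary, unitality makes $\pi(U)$ unitary with $\pi(U)^{\dag}=\pi(U^{\dag})=\pi(U)^{-1}$. Since $\pi$ is a $*$-homomorphism, $\pi(UeU^{\dag})=\pi(U)\pi(e)\pi(U^{\dag})$. We then compute
\begin{equation}
  [\mathcal{D},\pi(UeU^{\dag})]=\pi(U)\,[\pi(U^{\dag})\mathcal{D}\pi(U),\pi(e)]\,\pi(U^{\dag}),
\end{equation}
which follows by expanding both sides and using $\pi(U)\pi(U^{\dag})=\mathbb{I}$. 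The operator norm is invariant under multiplication on either side by unitaries, so
\begin{equation}
  \|[\mathcal{D},\pi(UeU^{\dag})]\|_{op}=\|[\pi(U^{\dag})\mathcal{D}\pi(U),\pi(e)]\|_{op}.
\end{equation}

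Next we apply the hypothesis with the unitary $U^{\dag}$ in place of $U$; this is legitimate because $U^{\dag}$ is also unitary in $\mathcal{A}$. The hypothesis then gives $\|[\pi(U^{\dag})\mathcal{D}\pi(U),\pi(e)]\|_{op}=\|[\mathcal{D},\pi(e)]\|_{op}$, hence $\|[\mathcal{D},\pi(UeU^{\dag})]\|_{op}=\|[\mathcal{D},\pi(e)]\|_{op}$ for every unitary $U$ and every $e$. This is exactly the hypothesis of Theorem~\ref{lu}, which concludes the proof.

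The only delicate point is the bookkeeping of $U$ versus $U^{\dag}$ in the conjugation identity. This is resolved because the hypothesis is assumed for all unitaries, so it can be applied to $U^{\dag}$. No other obstacle is expected.
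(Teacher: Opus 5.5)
Your proof is correct and matches the paper's argument: conjugate the commutator by the unitary $\pi(U)$ (which needs unitality of $\pi$), use unitary invariance of the operator norm, apply the hypothesis, and conclude with Theorem~\ref{lu}. The only difference is cosmetic. The paper applies the hypothesis with $U$ and gets invariance under $e\mapsto U^{\dag}eU$, while you apply it with $U^{\dag}$ and get $e\mapsto UeU^{\dag}$; these are equivalent because $U$ ranges over all unitaries.
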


\begin{proof}
For any unitary operator $U$, there are
\begin{eqnarray}
  \|[\mathcal{D}, \pi (e)]\|_{op} & = & \|[\pi (U)\mathcal{D} \pi (U^{\dag}),
  \pi (e)]\|_{op} \nonumber\\
  & = & \| \pi (U^{\dag}) [\pi (U)\mathcal{D} \pi (U^{\dag}), \pi (e)] \pi
  (U)\|_{op} \nonumber\\
  & = & \|[\mathcal{D}, \pi (U^{\dag}) \pi (e) \pi (U)]\|_{op} \nonumber\\
  & = & \|[\mathcal{D}, \pi (U^{\dag} eU)]\|_{op} . 
\end{eqnarray}
So $\|[\mathcal{D}, \pi (e)]\|_{op}$ is invariant under any unitary
transformations of $e$, from Theorem \ref{lu}, the Connes spectral distances are unitary
invariant.
\end{proof}

\begin{corollary}
  If for any unitary operators $U, U' \in \mathcal{U} (\mathcal{A})$, there is
  $\|[\mathcal{D}, \pi (e)]\|_{op} = \|[\pi (U)\mathcal{D} \pi (U^{\dag}), \pi
  (U' {eU'}^{\dag})]\|_{op}$, then the Connes spectral distances are unitary
  invariant.
\end{corollary}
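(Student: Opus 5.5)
The plan is to reduce the statement to Theorem \ref{lu} by showing that the hypothesis forces $\|[\mathcal{D}, \pi (e)]\|_{op}$ to be invariant under unitary conjugation of $e$ alone. Throughout, the representation $\pi$ is taken to be unital, so that $\pi(U)$ is unitary and $\pi(U^{\dag})=\pi(U)^{\dag}$, as in the preceding corollary.

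The first step is to specialize the hypothesis to $U=\mathbb{I}$. Unitality gives $\pi(\mathbb{I})\mathcal{D}\pi(\mathbb{I})=\mathcal{D}$. The hypothesis then reads
\begin{equation}
  \|[\mathcal{D}, \pi (e)]\|_{op} = \|[\mathcal{D}, \pi (U' e {U'}^{\dag})]\|_{op}
\end{equation}
for every unitary $U' \in \mathcal{U}(\mathcal{A})$ and every $e\in\mathcal{A}$. This is exactly the assumption of Theorem \ref{lu}, which then yields unitary invariance of the Connes spectral distances directly.

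As a consistency check, setting $U'=\mathbb{I}$ instead gives the hypothesis of the previous corollary, $\|[\pi (U)\mathcal{D} \pi (U^{\dag}), \pi (e)]\|_{op} = \|[\mathcal{D}, \pi (e)]\|_{op}$. That corollary also gives the conclusion, so either specialization works; I will present the $U=\mathbb{I}$ route because it is the shortest.

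The only point needing care is that $\mathbb{I}$ must be an admissible choice of $U$ in the hypothesis, and that $\pi(\mathbb{I})=\mathbb{I}$ holds. Both are guaranteed by the standing assumption of a unital representation, so I expect no real obstacle; the argument is a one-line reduction.
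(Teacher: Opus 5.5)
Your reduction is correct and is the intended one: the paper gives no separate proof, treating the statement as immediate from Theorem \ref{lu} and the preceding corollary, and specializing to $U=\mathbb{I}$ (with $\pi(\mathbb{I})=\mathbb{I}$), or to $U'=\mathbb{I}$, is exactly that step. As a minor aside, unitality is not needed. For any fixed $U$, the case $U'=\mathbb{I}$ gives $\|[\mathcal{D},\pi(f)]\|_{op}=\|[\pi(U)\mathcal{D}\pi(U^{\dag}),\pi(f)]\|_{op}$ for all $f$, and applying this with $f=U'e{U'}^{\dag}$ recovers the hypothesis of Theorem \ref{lu} directly.
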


\begin{theorem}
  \label{udu}The Connes spectral distances in the spectral triple
  $(\mathcal{A}, \mathcal{H}, \mathcal{D})$ are unitary invariant if and only
  if the spectral triples $(\mathcal{A}, \mathcal{H}, \mathcal{D})$ and
  $(\mathcal{A}, \mathcal{H}, \pi (U)\mathcal{D} \pi (U^{\dag}))$ have the
  same metric for any unitary operator $U \in \mathcal{U} (\mathcal{A})$.
\end{theorem}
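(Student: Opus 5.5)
The plan is to compare the two metrics through the computation already used in the proof of the first Corollary of this section. Assume $\pi$ is unital, as the paper does, so that $\pi(U)$ is unitary. Write $\mathcal{D}_U:=\pi(U)\mathcal{D}\pi(U^{\dag})$. For any $e\in\mathcal{A}$, conjugating by the unitary $\pi(U^{\dag})$ does not change the operator norm, so
\begin{equation}
  \|[\mathcal{D}_U,\pi(e)]\|_{op}=\|\pi(U^{\dag})[\mathcal{D}_U,\pi(e)]\pi(U)\|_{op}=\|[\mathcal{D},\pi(U^{\dag}eU)]\|_{op}.
\end{equation}
Denote by $B$ and $B_U$ the unit Lipschitz balls of $(\mathcal{A},\mathcal{H},\mathcal{D})$ and $(\mathcal{A},\mathcal{H},\mathcal{D}_U)$. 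The identity above says exactly that
\begin{equation}
  B_U=\{e\in\mathcal{A}: U^{\dag}eU\in B\}=UBU^{\dag}.
\end{equation}

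Next I compute the distance $d_{\mathcal{D}_U}$. Substituting $e=Ue'U^{\dag}$ with $e'\in B$, and using cyclicity of the trace, gives
\begin{equation}
  d_{\mathcal{D}_U}(\rho_1,\rho_2)=\sup_{e'\in B}|\mathrm{tr}(\Delta\rho\, Ue'U^{\dag})|=\sup_{e'\in B}|\mathrm{tr}(U^{\dag}\Delta\rho U\, e')|=d_{\mathcal{D}}(U^{\dag}\rho_1U,\,U^{\dag}\rho_2U).
\end{equation}
This is the key identity: $d_{\mathcal{D}_U}(\rho_1,\rho_2)=d_{\mathcal{D}}(U^{\dag}\rho_1U,U^{\dag}\rho_2U)$ for every pair of states and every unitary $U$. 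Here $U^{\dag}\rho_iU$ is again a density matrix, so the right-hand side is a Connes distance between states.

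Both directions now follow by relabeling.
\begin{itemize}
  \item ($\Rightarrow$) If the distances of $(\mathcal{A},\mathcal{H},\mathcal{D})$ are unitary invariant, apply the definition with the unitary $U^{\dag}\in\mathcal{U}(\mathcal{A})$. The right-hand side of the key identity then equals $d_{\mathcal{D}}(\rho_1,\rho_2)$. Hence $d_{\mathcal{D}_U}=d_{\mathcal{D}}$ for all $U$, i.e.\ the two triples have the same metric.
  \item ($\Leftarrow$) If $d_{\mathcal{D}_U}=d_{\mathcal{D}}$ for every $U$, apply this with $U^{\dag}$ in place of $U$. The key identity gives $d_{\mathcal{D}}(\rho_1,\rho_2)=d_{\mathcal{D}_{U^{\dag}}}(\rho_1,\rho_2)=d_{\mathcal{D}}(U\rho_1U^{\dag},U\rho_2U^{\dag})$, which is unitary invariance.
\end{itemize}

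The argument is mostly bookkeeping. The one step needing care is the set identity $B_U=UBU^{\dag}$: it relies on $\pi$ being a unital $*$-representation, so that $\pi(U^{\dag})=\pi(U)^{\dag}$ is unitary and $\pi(U^{\dag}eU)=\pi(U)^{\dag}\pi(e)\pi(U)$. I will state the unitality assumption explicitly.

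Finiteness of the distances is not needed for the key identity: it holds with values in $[0,+\infty]$, so the equivalence holds verbatim. The sup is taken over all of $B$, so no optimal element is required.
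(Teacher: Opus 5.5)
Your proof is correct and follows essentially the same route as the paper. Both arguments show that $e$ lies in the unit Lipschitz ball of $\mathcal{D}$ if and only if $UeU^{\dag}$ lies in that of $\pi(U)\mathcal{D}\pi(U^{\dag})$, use cyclicity of the trace to rewrite the new distance as $d_{\mathcal{D}}(U^{\dag}\rho_1U,U^{\dag}\rho_2U)$, and read off both directions; your explicit unitality hypothesis is exactly what the paper's ``obviously'' step uses tacitly, and it is genuinely needed, since for the non-unital representation $\pi(a)=\mathrm{diag}(a,0)$ of Section~\ref{sec5} one gets $\pi(U)\mathcal{D}\pi(U^{\dag})=0$ and the equivalence fails.
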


\begin{proof}
Denote
$\mathcal{D}' = \pi (U) \mathcal{D} \pi (U^{\dag})$,
and $B' \equiv \{e \in \mathcal{A}: \|[\mathcal{D}', \pi (e)]\|_{op}
\leqslant 1\}$ .
Obviously, for the given unitary operator $U$, if $e \in B$ in the
spectral triple $(\mathcal{A}, \mathcal{H}, \mathcal{D})$, then there is $e' =
UeU^{\dag} \in B'$ in $(\mathcal{A}, \mathcal{H}, \mathcal{D}')$; vice
versa.

If the Connes spectral distances in the spectral
triple $(\mathcal{A}, \mathcal{H}, \mathcal{D})$ are unitary invariant, then
for any states $\rho_1, \rho_2$, and $\rho_i' = U^{\dag} \rho_i U$, there is
\begin{eqnarray}
  d_{\mathcal{D}} (\rho_1, \rho_2) & = & \sup_{e \in B} | \mathrm{tr} (\Delta
  \rho e) | = d_{\mathcal{D}} (\rho_1', \rho_2') \nonumber\\
  & = & \sup_{e \in B} | \mathrm{tr} (\Delta \rho' e) | = \sup_{e \in B} |
  \mathrm{tr} (U^{\dag} \Delta \rho U e) | \nonumber\\
  & = & \sup_{e \in B} | \mathrm{tr} (\Delta \rho U eU^{\dag}) | = \sup_{e
  \in B} | \mathrm{tr} (\Delta \rho e') | \nonumber\\
  & = & \sup_{e' \in B'} | \mathrm{tr} (\Delta \rho e') | \nonumber\\
  & = & d_{\mathcal{D}'} (\rho_1, \rho_2) . 
\end{eqnarray}
Therefore, any two states $\rho_1, \rho_2\in\mathcal{A}$ have the same spectral distances in
the spectral triples $(\mathcal{A}, \mathcal{H}, \mathcal{D})$ and
$(\mathcal{A}, \mathcal{H}, \pi (U)\mathcal{D} \pi (U^{\dag}))$. So
$(\mathcal{A}, \mathcal{H}, \mathcal{D})$ and $(\mathcal{A}, \mathcal{H}, \pi
(U)\mathcal{D} \pi (U^{\dag}))$ have the same metric.

On the other hand, for any two states $\rho_1, \rho_2\in\mathcal{A}$, if they have the same
spectral distances in these two spectral triples, then we have
\begin{eqnarray}
  d_{\mathcal{D}} (\rho_1, \rho_2) & = & \sup_{e \in B} | \mathrm{tr} (\Delta
  \rho \, e) | \nonumber\\
  & = & d_{\mathcal{D}'} (\rho_1, \rho_2) \nonumber\\
  & = & \sup_{e' \in B'} | \mathrm{tr} (\Delta \rho e') | \nonumber\\
  & = & \sup_{e \in B} | \mathrm{tr} (\Delta \rho UeU^{\dag}) | \nonumber\\
  & = & \sup_{e \in B} | \mathrm{tr} (U^{\dag} \Delta \rho U \,
  e) | \nonumber\\
  & = & d_{\mathcal{D}} (\rho_1', \rho_2') . 
\end{eqnarray}
Therefore, the Connes spectral distances between any states $\rho_1, \rho_2$
are unitary invariant.
\end{proof}

\begin{corollary}
  The Connes spectral distances in the spectral triple $(\mathcal{A},
  \mathcal{H}, \mathcal{D})$ are unitary invariant if and only if the Connes
  spectral distances in $(\mathcal{A}, \mathcal{H}, \pi (U)\mathcal{D} \pi
  (U^{\dag}))$ are unitary invariant, where $U \in \mathcal{U} (\mathcal{A})$
  is any unitary operator.
\end{corollary}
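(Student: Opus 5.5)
The plan is to apply Theorem \ref{udu} twice, using the fact that conjugation by $\pi$ of unitaries composes well. Write $\mathcal{D}' = \pi(U)\mathcal{D}\pi(U^{\dag})$ for a fixed unitary $U \in \mathcal{U}(\mathcal{A})$. Since $\pi$ is a unital $*$-representation, for any unitary $V \in \mathcal{U}(\mathcal{A})$ we have
\begin{equation}
  \pi(V)\mathcal{D}'\pi(V^{\dag}) = \pi(VU)\mathcal{D}\pi((VU)^{\dag}),
\end{equation}
and $VU$ ranges over all of $\mathcal{U}(\mathcal{A})$ as $V$ does. Symmetrically, $\mathcal{D} = \pi(U^{\dag})\mathcal{D}'\pi(U)$. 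So the two families of conjugated Dirac operators $\{\pi(V)\mathcal{D}\pi(V^{\dag})\}_V$ and $\{\pi(V)\mathcal{D}'\pi(V^{\dag})\}_V$ coincide as sets.

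For the forward direction, I will assume the distances in $(\mathcal{A},\mathcal{H},\mathcal{D})$ are unitary invariant. By Theorem \ref{udu}, $d_{\pi(W)\mathcal{D}\pi(W^{\dag})} = d_{\mathcal{D}}$ for every unitary $W$. In particular, this gives $d_{\mathcal{D}'} = d_{\mathcal{D}}$. For any $V$, taking $W = VU$ gives
\begin{equation}
  d_{\pi(V)\mathcal{D}'\pi(V^{\dag})} = d_{\mathcal{D}} = d_{\mathcal{D}'}.
\end{equation}
Applying Theorem \ref{udu} to the triple $(\mathcal{A},\mathcal{H},\mathcal{D}')$ then yields unitary invariance there. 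Finiteness of the distances for $\mathcal{D}'$, which the definition requires, follows from $d_{\mathcal{D}'} = d_{\mathcal{D}}$.

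The converse is the same argument with the roles of $\mathcal{D}$ and $\mathcal{D}'$ swapped, using $\mathcal{D} = \pi(U^{\dag})\mathcal{D}'\pi(U)$ and the unitary $U^{\dag}$.

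Alternatively, one can bypass Theorem \ref{udu} and argue directly. The first step of its proof shows $e \in B_{\mathcal{D}}$ if and only if $UeU^{\dag} \in B_{\mathcal{D}'}$, hence $d_{\mathcal{D}'}(\rho_1,\rho_2) = d_{\mathcal{D}}(U^{\dag}\rho_1U, U^{\dag}\rho_2U)$. Unitary invariance then transfers immediately by substituting conjugated states.

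The only delicate point is the bookkeeping: checking that $\pi(V)\pi(U) = \pi(VU)$ and that $\pi(U)^{\dag} = \pi(U^{\dag})$ is unitary. Both follow from $\pi$ being a unital $*$-homomorphism, as noted before the preceding corollaries. I expect no real obstacle beyond this.
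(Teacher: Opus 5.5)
Your proof is correct and follows the route the paper intends: the paper gives no separate proof and presents the corollary as an immediate consequence of Theorem~\ref{udu}, and your direct alternative is just the first step of that theorem's proof ($e\in B_{\mathcal{D}}$ if and only if $UeU^{\dag}\in B_{\mathcal{D}'}$). One simplification is available: once Theorem~\ref{udu} gives $d_{\mathcal{D}'}=d_{\mathcal{D}}$ (or, in the converse, $d_{\mathcal{D}}=d_{\mathcal{D}'}$ via the unitary $U^{\dag}$), unitary invariance passes trivially between equal metrics, so you do not need the composition $\pi(V)\mathcal{D}'\pi(V^{\dag})=\pi(VU)\mathcal{D}\pi((VU)^{\dag})$ or the second application of the theorem.
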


There are also some other articles have studied the question of isometries for some types of spectral triples, such as Ref.~\cite{Bassi}.

\section{Spectral triples with quantum trace distances}\label{sec5}

Quantum trace distance is one of the most useful distance measures between
quantum states in quantum information sciences. It is also unitary invariant.
So it is also very significant to study spectral triples with Connes spectral
distances which equal quantum trace distances.

\begin{definition}
  \cite{Nielsen} For two states $\rho_1, \rho_2$, the quantum trace distance
  is defined as
  \begin{equation}
    d_{\mathrm{tr}} (\rho_1, \rho_2) = \| \rho_1 - \rho_2 \|_1 = \| \Delta
    \rho \|_1 = \mathrm{tr} \sqrt{\Delta \rho^{\dag} \Delta \rho},
  \end{equation}
  where the trace norm $\|A\|_1 := \mathrm{tr} \sqrt{A^{\dag} A}$. Note that,
  here we have ignored the coefficient $\frac{1}{2}$ for simplicity.
\end{definition}

\begin{theorem}\label{dtr}
  \cite{Nielsen} For two states $\rho_1, \rho_2$, the quantum trace distance
  is equal to
  \begin{equation}
    d_{\mathrm{tr}} (\rho_1, \rho_2) = \sup_{-\mathbb{I} \leqslant P \leqslant
    \mathbb{I}} \left| \mathrm{tr} [P (\rho_1 - \rho_2)] \right| = \sup_{\|P\|_{op}
    \leqslant 1} | \mathrm{tr} (P \Delta \rho) |,
  \end{equation}
  where $P$ are Hermitian operators.
\end{theorem}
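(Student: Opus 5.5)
The plan is to prove the two inequalities $\sup_{\|P\|_{op}\le 1}|\mathrm{tr}(P\Delta\rho)| \le \|\Delta\rho\|_1$ and $\ge \|\Delta\rho\|_1$ separately. Both rest on the spectral decomposition of the Hermitian, traceless matrix $\Delta\rho=\rho_1-\rho_2$. Note first that for Hermitian $P$ the condition $-\mathbb{I}\leqslant P\leqslant\mathbb{I}$ is equivalent to $\|P\|_{op}\leqslant 1$, since all eigenvalues of $P$ lie in $[-1,1]$. So the two suprema in the statement coincide, and it suffices to treat the second one.

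\emph{Upper bound.} Diagonalize $\Delta\rho=\sum_k \lambda_k |k\rangle\langle k|$ with real $\lambda_k$ and an orthonormal eigenbasis $\{|k\rangle\}$. Because $\Delta\rho$ is Hermitian, $\sqrt{\Delta\rho^\dag\Delta\rho}=|\Delta\rho|$, so $\|\Delta\rho\|_1=\sum_k|\lambda_k|$. For any Hermitian $P$ with $\|P\|_{op}\le 1$ we have
\begin{equation}
|\mathrm{tr}(P\Delta\rho)| = \Big|\sum_k \lambda_k \langle k|P|k\rangle\Big| \leqslant \sum_k |\lambda_k|\,|\langle k|P|k\rangle| \leqslant \sum_k|\lambda_k| ,
\end{equation}
because $|\langle k|P|k\rangle|\le\|P\|_{op}$ for each unit vector $|k\rangle$.

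\emph{Attainment.} Choose $P_o=\sum_k \mathrm{sgn}(\lambda_k)|k\rangle\langle k|$, with the convention $\mathrm{sgn}(0)=1$. Equivalently, $P_o=Q_+-Q_-$, where $Q_\pm$ are the projectors onto the positive and negative parts of $\Delta\rho$. This $P_o$ is Hermitian with eigenvalues $\pm1$, so $\|P_o\|_{op}=1$ and it is admissible. A direct computation gives $\mathrm{tr}(P_o\Delta\rho)=\sum_k|\lambda_k|=\|\Delta\rho\|_1$. Hence the supremum is attained and equals $d_{\mathrm{tr}}(\rho_1,\rho_2)$.

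There is no real obstacle in this argument. The only points needing care are justifying $|\langle k|P|k\rangle|\le 1$ and checking the equivalence between the operator-inequality form of the constraint and the operator-norm form. Both follow immediately from the spectral theorem for Hermitian $P$.
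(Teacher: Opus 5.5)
Your proof is correct and uses the same argument the paper relies on. The paper quotes this result from Nielsen--Chuang without proof, but after Theorem~\ref{t6} it diagonalizes $\Delta\rho=\sum_i a_i|i\rangle\langle i|$ (over the nonzero eigenvalues) and takes $e_o=\sum_i \frac{a_i}{|a_i|}|i\rangle\langle i|$, which is your $Q_+-Q_-$, while your eigenbasis estimate $|\mathrm{tr}(P\Delta\rho)|\le\sum_k|\lambda_k|\,|\langle k|P|k\rangle|\le\sum_k|\lambda_k|$ supplies the upper bound the paper leaves to the reference.

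Two minor points. With $\mathrm{sgn}(0)=1$ your $P_o$ is actually $Q_+-Q_-$ plus the projector onto $\ker\Delta\rho$, which is harmless since both are admissible and give the same trace. Also, $|\langle k|P|k\rangle|\le\|P\|_{op}$ holds for every $P$, not only Hermitian ones, so your bound also covers the non-Hermitian $e$ in the ball $\{\|e\|_{op}\le 1\}$ used in Theorem~\ref{t6}.
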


\begin{theorem}
  Consider the spectral triple $(\mathcal{A}, \mathcal{H}, \mathcal{D})$ with
  $\mathcal{A}=\mathbb{M}_n (\mathbb{C})$, there are some
  representation $\pi$ of $\mathcal{A}$ on some Hilbert space $\mathcal{H}$,
  and some Dirac operator $\mathcal{D} \in L (\mathcal{H})$ which satisfy the
  relation $\|[\mathcal{D}, \pi (e)]\|_{op} = \|e\|_{op}$ for any element $e
  \in \mathcal{A}$.
\end{theorem}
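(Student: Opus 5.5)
The plan is to give an explicit construction on a doubled Hilbert space. First observe that the representation cannot be unital. If $\pi(\mathbb{I})=\mathbb{I}$, then $[\mathcal{D},\pi(\mathbb{I})]=0$, while $\|\mathbb{I}\|_{op}=1$. This is the same obstruction as in Lemma \ref{tloe}, where a unital $\pi$ makes the seminorm blind to multiples of the identity.

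So I take $\mathcal{H}=\mathbb{C}^n\oplus\mathbb{C}^n=\mathbb{C}^{2n}$ and the (non-unital) $*$-representation
\begin{equation}
  \pi(a)=\left(\begin{array}{cc} a & 0\\ 0 & 0\end{array}\right),
  \qquad
  \mathcal{D}=\left(\begin{array}{cc} 0 & \mathbb{I}_n\\ \mathbb{I}_n & 0\end{array}\right).
\end{equation}
Here $\mathcal{D}$ is self-adjoint, and since everything is finite dimensional, boundedness of commutators and compactness of resolvents are automatic. So $(\mathcal{A},\mathcal{H},\mathcal{D})$ is a spectral triple.

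The key step is a direct block computation:
\begin{equation}
  [\mathcal{D},\pi(e)]=\left(\begin{array}{cc} 0 & 0\\ e & 0\end{array}\right)-\left(\begin{array}{cc} 0 & e\\ 0 & 0\end{array}\right)=\left(\begin{array}{cc} 0 & -e\\ e & 0\end{array}\right)\equiv M .
\end{equation}
Then $M^{\dag}M=\mathrm{diag}(e^{\dag}e,\;ee^{\dag})$, so
\begin{equation}
  \|M\|_{op}^2=\max\bigl(\|e^{\dag}e\|_{op},\|ee^{\dag}\|_{op}\bigr)=\|e\|_{op}^2 .
\end{equation}
The last equality uses $\|e\|_{op}=\|e^{\dag}\|_{op}$ and the C$^*$-identity. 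This gives $\|[\mathcal{D},\pi(e)]\|_{op}=\|e\|_{op}$ for every $e\in\mathcal{A}$, as claimed. One can also note that the Lipschitz seminorm here is unitarily invariant, so Theorem \ref{lu} applies. Moreover, the ball $B$ becomes $\{\|e\|_{op}\le 1\}$, so by Theorem \ref{dtr} the Connes distance equals $d_{\mathrm{tr}}$.

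The only real obstacle is realizing that unitality must be dropped; after that the verification is routine. For more general examples I would replace $\mathbb{I}_n$ in $\mathcal{D}$ by any unitary $V$. I would also take direct sums of $m$ copies of this construction, which gives the family on $\bigoplus_{i=1}^m\mathbb{C}^n$ with $m$ even. The same norm computation then applies, since $[\mathcal{D},\pi(e)]$ has blocks $eV$ and $-Ve$, whose operator norms both equal $\|e\|_{op}$.
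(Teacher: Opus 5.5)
Your construction is exactly the paper's ($\mathcal{H}=\mathbb{C}^n\oplus\mathbb{C}^n$, the non-unital $\pi(a)=\mathrm{diag}(a,0)$, and $\mathcal{D}$ with off-diagonal identity blocks), and your argument is correct. There are two harmless slips. First, $M^{\dag}M$ is actually $\mathrm{diag}(e^{\dag}e,\,e^{\dag}e)$ rather than $\mathrm{diag}(e^{\dag}e,\,ee^{\dag})$, although the norm comes out the same. Second, in your side remark a general unitary $V$ needs $V^{\dag}$ in the lower-left block so that $\mathcal{D}$ stays self-adjoint.
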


To proof this theorem, we will construct a concrete example.

\begin{proof}
For $\mathcal{A}=\mathbb{M}_n (\mathbb{C})$, one can
choose the Hilbert space $\mathcal{H}=\mathbb{C}^n \oplus \mathbb{C}^n$, and
the representation $\pi (a) = \left( \begin{array}{cc}
  a & 0\\
  0 & 0
\end{array} \right)$ for $a \in \mathcal{A}$. The Dirac operator is
\begin{equation}
  D = \left( \begin{array}{cc}
    0 & \mathbb{I}_n\\
    \mathbb{I}_n & 0
  \end{array} \right) .
\end{equation}
The commutator $[D, \pi (a)]$ is:
\begin{equation}
  [D, \pi (a)] = \left( \begin{array}{cc}
    0 & - a\\
    a & 0
  \end{array} \right),
\end{equation}
and
\begin{equation}
  [D, \pi (a)]^{\dag} [D, \pi (a)] = \left( \begin{array}{cc}
    a^{\dag} a & 0\\
    0 & a^{\dag} a
  \end{array} \right) .
\end{equation}
So the eigenvalues of $[D, \pi (a)]^{\dag} [D, \pi (a)]$ are exactly the
eigenvalues of $a^{\dag} a$. Therefore, we have
\begin{equation}
  \|[D, \pi (a)]\|_{op} = \|a\|_{op} .
\end{equation}
\end{proof}

Note that the representation above is not unital. In this case,
the Lipschitz seminorm is just the normal operator norm. It is unitary invariant, and then the Connes spectral distances are unitary invariant.

\begin{theorem}
  \label{t6}In the spectral triple $(\mathcal{A}, \mathcal{H}, \mathcal{D})$,
  if the Dirac operator $\mathcal{D}$ satisfies the relation $\|[\mathcal{D},
  \pi (e)]\|_{op} = \|e\|_{op}$ for any element $e \in \mathcal{A}$, then the
  ball condition becomes
  \begin{equation}
    B = \{e \in \mathcal{A}: \|e\|_{op} \leqslant 1\},
  \end{equation}
  and the Connes spectral distances are equal to the quantum trace distances,
  \begin{equation}
    d (\rho_1, \rho_2) = d_{\mathrm{tr}} (\rho_1, \rho_2) = \| \rho_1 - \rho_2
    \|_1 .
  \end{equation}
\end{theorem}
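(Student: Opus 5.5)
The plan is to first rewrite the ball condition using the hypothesis, and then identify the resulting supremum with the variational formula for the trace distance in Theorem \ref{dtr}. The only real work is matching the admissible sets on the two sides.

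Since $\|[\mathcal{D},\pi(e)]\|_{op}=\|e\|_{op}$ for every $e\in\mathcal{A}$, the defining inequality of $B$ is literally $\|e\|_{op}\leqslant 1$. This gives $B=\{e\in\mathcal{A}:\|e\|_{op}\leqslant 1\}$ directly, and therefore
\begin{equation}
d(\rho_1,\rho_2)=\sup_{e\in\mathcal{A},\,\|e\|_{op}\leqslant 1}|\mathrm{tr}(\Delta\rho\, e)|.
\end{equation}

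Theorem \ref{dtr} takes the supremum only over Hermitian $P$ with $\|P\|_{op}\leqslant 1$, while $B$ contains non-Hermitian elements as well. So I will prove two inequalities.

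\emph{Lower bound.} The Hermitian elements of $B$ are exactly the Hermitian $P$ with $\|P\|_{op}\leqslant 1$. Since $\mathcal{A}=\mathbb{M}_n(\mathbb{C})$, every such $P$ lies in $\mathcal{A}$. Hence $d\geqslant d_{\mathrm{tr}}$.

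\emph{Upper bound.} For this direction I could invoke the remark that Hermitian elements attain the supremum \cite{Iochum}. To keep the argument self-contained, I will instead use duality of the trace norm and the operator norm. Write the spectral decomposition $\Delta\rho=\sum_k\lambda_k|k\rangle\langle k|$. For any $e$ with $\|e\|_{op}\leqslant 1$,
\begin{equation}
|\mathrm{tr}(\Delta\rho\, e)|\leqslant\sum_k|\lambda_k|\,|\langle k|e|k\rangle|\leqslant\sum_k|\lambda_k|=\|\Delta\rho\|_1,
\end{equation}
because $|\langle k|e|k\rangle|\leqslant\|e\|_{op}\leqslant 1$. This gives $d\leqslant\|\Delta\rho\|_1$.

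\emph{Attainment.} The bound is reached at the Hermitian element $e_o=\sum_k\mathrm{sgn}(\lambda_k)|k\rangle\langle k|$. It satisfies $\|e_o\|_{op}\leqslant1$, so $e_o\in B$, and $\mathrm{tr}(\Delta\rho\,e_o)=\|\Delta\rho\|_1$. This is consistent with the paper's convention that the supremum is attained, and it identifies an optimal element explicitly.

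The main step to handle carefully is the upper bound, because $B$ contains non-Hermitian elements that Theorem \ref{dtr} does not cover; the eigenbasis estimate above resolves it. The remaining steps are immediate. Finally, since the trace distance is unitary invariant, the result is also consistent with Theorem \ref{lu}.
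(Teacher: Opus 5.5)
Your proof is correct and follows the paper's route: you substitute the hypothesis into the ball condition and identify the resulting supremum with the variational formula of Theorem~\ref{dtr}, and your explicit $e_o$ is the paper's optimal element~(\ref{oe}). The one difference is that the paper handles the non-Hermitian elements of $B$ only implicitly, through its earlier remark that Hermitian elements attain the supremum \cite{Iochum}, whereas your eigenbasis estimate $|\mathrm{tr}(\Delta\rho\, e)|\leqslant\sum_k|\lambda_k|\,|\langle k|e|k\rangle|$ settles that point directly.
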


\begin{proof}
From Theorem \ref{dtr}, in this case, the Connes spectral distance is
\begin{eqnarray}
  d (\rho_1, \rho_2) & = & \sup_{\|e\|_{op} \leqslant 1} | \mathrm{tr} (\Delta
  \rho \, e) | \nonumber\\
  & = & d_{\mathrm{tr}} (\rho_1, \rho_2) \nonumber\\
  & = & \| \rho_1 - \rho_2 \|_1 .
\end{eqnarray}
\end{proof}

Since the Hermitian operator $\Delta \rho = \rho_1 - \rho_2$ can be
represented by
\begin{equation}
  \Delta \rho = \sum_i a_i |i \rangle \langle i|, \quad \sum_i a_i = 0, \quad
  a_i \neq 0,
\end{equation}
we have
\begin{equation}
  d (\rho_1, \rho_2) = d_{\mathrm{tr}} (\rho_1, \rho_2) =\| \rho_1 - \rho_2 \|_1 = \mathrm{tr}
  \sqrt{\Delta \rho^{\dag} \Delta \rho} = \sum_i |a_i |,
\end{equation}
and the optimal element $e_o$ can be chosen as
\begin{equation}
  e_o = \sum_i \frac{a_i}{|a_i |} |i \rangle \langle i|. \label{oe}
\end{equation}
One can easily verify that, there are
\begin{equation}
  \|e_o \|_{op} = 1, \quad \mathrm{tr} (\Delta \rho \, e_o) =
  \sum_i |a_i |=d (\rho_1, \rho_2) .
\end{equation}
It is easy to see that, $e_o$ (\ref{oe}) is the only possible Hermitian
optimal element.

Furthermore, if the representation $\pi$ is \textit{unital}, from Lemma (\ref{tloe}),
one can choose all optimal elements $e_o$ to be traceless. From the result
(\ref{oe}), one can find that $\mathcal{A}=\mathbb{M}_2 (\mathbb{C})$ is the
only possible case in which all optimal elements can be traceless.

\begin{theorem}
  Consider the spectral triple $(\mathcal{A}, \mathcal{H}, \mathcal{D})$ with
  $\mathcal{A}=\mathbb{M}_n (\mathbb{C})$, and some unital
  representation $\pi$ of $\mathcal{A}$ on the Hilbert space $\mathcal{H}$,
  if the Dirac operator $\mathcal{D} \in L (\mathcal{H})$ satisfies the
  relation $\|[\mathcal{D}, \pi (e_o)]\|_{op} = \|e_o \|_{op}$ for any optimal
  elements $e_o$ of Connes spectral distances, then there should be
  $\mathcal{A}=\mathbb{M}_2 (\mathbb{C})$.
\end{theorem}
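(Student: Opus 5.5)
The plan is to play the unital translation freedom of Lemma \ref{tloe} against the rigidity of the optimal element (\ref{oe}) in the trace-distance setting. The intended reading of the hypothesis is that the optimal elements coincide with those of Theorem \ref{t6}. Concretely, for every pair of states, $d(\rho_1,\rho_2)=\sup_{\|e\|_{op}\le 1}|\mathrm{tr}(\Delta\rho\, e)|=\|\Delta\rho\|_1$.

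First, I fix a pair of states with $\Delta\rho=\sum_i a_i|i\rangle\langle i|$ having all $a_i\neq 0$. By the discussion after Theorem \ref{t6}, the Hermitian optimal element $e_o$ in (\ref{oe}) is then unique: it has eigenvalues $\pm1$ and operator norm $1$.

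Second, since $\pi$ is unital, Lemma \ref{tloe} says $\lambda\mathbb{I}+e_o$ is again optimal for every real $\lambda$. The hypothesis then forces
\begin{equation}
\|\lambda\mathbb{I}+e_o\|_{op}=\|[\mathcal{D},\pi(\lambda\mathbb{I}+e_o)]\|_{op}=\|[\mathcal{D},\pi(e_o)]\|_{op}=\|e_o\|_{op}=1
\end{equation}
for all $\lambda$. If $e_o$ has both eigenvalues $+1$ and $-1$, then $\|\lambda\mathbb{I}+e_o\|_{op}=1+|\lambda|$. This contradicts the display for every $\lambda\neq0$. Hence the assumption cannot hold for all such pairs, and the only escape is the one the paper singles out: the optimal elements must all be representable as traceless, with the translates $\lambda\mathbb{I}+e_o$ excluded from the optimal set.

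Third, I impose tracelessness on (\ref{oe}). This requires $\mathrm{tr}\, e_o=\#\{a_i>0\}-\#\{a_i<0\}=0$ for every admissible $\Delta\rho$. Since $\sum_i a_i=0$, we always have at least one positive and one negative $a_i$. For $n\ge3$, however, I can choose states with $\Delta\rho=\mathrm{diag}(2,-1,-1,0,\dots)$, perturbed slightly so that the zero entries become nonzero. Then the positive and negative counts differ, so $\mathrm{tr}\, e_o\neq0$. An explicit choice is $\rho_1=|1\rangle\langle1|$ and $\rho_2$ a suitable diagonal mixture. For $n=2$, $\Delta\rho$ traceless with nonzero entries forces $a_1=-a_2$, giving $e_o=\mathrm{diag}(1,-1)$ up to sign, which is traceless. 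Hence $n=2$.

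The main obstacle is the logical gap in the second step. The hypothesis as literally stated, combined with Lemma \ref{tloe}, seems contradictory for every $n$, so I must commit to the paper's reading: the optimal elements are normalized to be traceless via Lemma \ref{tloe}, and this normalized $e_o$ must coincide with the unique element (\ref{oe}). I will make that interpretation explicit and then rely on the counting argument of the third step, which is the substantive content.
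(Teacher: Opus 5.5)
This is essentially the paper's argument: normalize the optimal elements to be traceless via Lemma~\ref{tloe}, then observe that (\ref{oe}) has trace $\#\{a_i>0\}-\#\{a_i<0\}$, which vanishes for every admissible $\Delta\rho$ only when $n=2$. Your Step 2 rightly makes explicit something the paper leaves implicit: the hypothesis can only be meant for the traceless representative (read literally, it fails for $\lambda\mathbb{I}+e_o$ once distances are finite, so the statement would hold only vacuously), and that representative is then forced to equal (\ref{oe}) by $\|e_o\|_{op}=\|[\mathcal{D},\pi(e_o)]\|_{op}=1$ together with $d=d_{\mathrm{tr}}$.

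One detail needs fixing: for even $n\geq 4$, perturbing a zero entry of $\mathrm{diag}(2,-1,-1,0,\dots)$ positively gives equal sign counts, so take instead $\rho_1=|1\rangle\langle 1|$ and $\rho_2=\frac{1}{n-1}\sum_{i\geq 2}|i\rangle\langle i|$, for which (\ref{oe}) has one eigenvalue $+1$, $n-1$ eigenvalues $-1$, and trace $2-n\neq 0$.
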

This just corresponds to the case of one-qubit states.

There are also many examples with the algebra $\mathcal{A}$ not being some matrix algebra $\mathbb{M}_n (\mathbb{C})$.
For example, one of the simplest case is the two point space.
Define the spectral triple $(\mathcal{A}, \mathcal{H}, \mathcal{D})$ with $\mathcal{A}\cong\mathbb{C}^2$ (diagonal $2 \times 2$ matrices),
$\mathcal{H}=\mathbb{C}^2$ and $D = \frac{1}{2} \left(\begin{array}{cc}
	0 & 1\\
	1 & 0
\end{array}\right)$, and $\pi(e)=e$ for $e\in\mathcal{A}$. The representation $\pi$ is also linear and unital. For any traceless Hermitian
element $e = \left(\begin{array}{cc}
	a & 0\\
	0 & - a
\end{array}\right) \in \mathcal{A}$, the Lipschitz seminorm is
\begin{equation}
\| [D, e] \|_{o p} = \left| \text{$a$} \right| = \|e\|_{op} .
\end{equation}
For any two states
\begin{equation}
\rho_1 = \left(\begin{array}{cc}
	p & 0\\
	0 & 1 - p
\end{array}\right),\qquad
\rho_2 = \left(\begin{array}{cc}
	q & 0\\
	0 & 1 - q
\end{array}\right),
\end{equation}
where $0\leqslant p,q\leqslant1$, and
\begin{equation}
\Delta\rho =\rho_1-\rho_2= \left(\begin{array}{cc}
p-q & 0\\
0 & q - p
\end{array}\right).
\end{equation}
The Connes distance is:
\begin{equation}
d (\rho_1,\rho_2) = \sup_{e \in B}[ \mathrm{tr} (\Delta \rho
\, e)] = \sup_{|a| \le 1} 2 (p - q) a = 2 |p - q| =
d_{\mathrm{tr}} (\rho_1,\rho_2) .
\end{equation}
Therefore, the spectral distance exactly matches the trace distance. The algebra $\mathcal{A}$ is commutative, so the spectral distances are unitary invariant.

In Ref.~\cite{Kumar1}, the authors also constructed a concrete spectral triple related to the trace distance.

\section{Connes spectral distances of one-qubit states}\label{sec6}
In this section, we will study some concrete examples for one-qubit states. These spectral triples can help us study the geometric structure of quantum states.

Similar to the construction in Ref. \cite{LX}, one can define a spectral
triple as follows.
\begin{equation}
	\mathcal{A}= \{ |i \rangle
	\langle j| \}\cong\mathbb{M}_2 (\mathbb{C}) ,\qquad \mathcal{H}_4  =\mathbb{C}^2 \otimes F\cong\mathbb{C}^4,\qquad F = \{
|0 \rangle, |1 \rangle \},
\end{equation}
and $\pi (a) =\mathbb{I}_2 \otimes a = \left(
\begin{array}{cc}
  a & 0\\
  0 & a
\end{array} \right)$, $\forall a \in \mathcal{A}$. The Dirac operator is
\begin{equation}\label{dd4}
  \mathcal{D}_4 = \frac{1}{4}  \sum_{i = 1}^3 \sigma_i \otimes \sigma_i =
  \frac{1}{4}  (\sigma_1 \otimes \sigma_1 + \sigma_2 \otimes \sigma_2 +
  \sigma_3 \otimes \sigma_3),
\end{equation}
where $\sigma_i$ are the Pauli matrices. In this case, the representation
$\pi$ is linear and unital.

In general, the density matrix $\rho$ for a qubit can be expressed as
\cite{Nielsen}
\begin{equation}
  \label{rr} \rho = \frac{I + \vec{r} \cdot \vec{\sigma}}{2},
\end{equation}
where the real vector $\vec{r}$ are the so-called Bloch vector, $|
\vec{r} | \leqslant 1$, and $\vec{\sigma} = (\sigma_1, \sigma_2, \sigma_3)$.

\begin{lemma}
  In the spectral triple $(\mathbb{M}_2 (\mathbb{C}), \mathbb{C}^4,
  \mathcal{D}_4)$, for any traceless Hermitian matrix $e \in \mathbb{M}_2
  (\mathbb{C})$, there is $\|[\mathcal{D}_4, \pi (e)]\|_{op} = \|e\|_{op}$.
\end{lemma}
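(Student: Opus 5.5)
The plan is to rewrite the Dirac operator $\mathcal{D}_4$ in terms of the swap operator, which turns the commutator into something whose operator norm can be read off from the spectrum of $e$. Let $S$ be the swap operator on $\mathbb{C}^2\otimes\mathbb{C}^2$, $S(u\otimes v)=v\otimes u$. The standard identity
\begin{equation*}
\sum_{i=1}^3\sigma_i\otimes\sigma_i = 2S-\mathbb{I}_4
\end{equation*}
can be checked on the triplet/singlet basis, or directly on the computational basis. It gives $\mathcal{D}_4=\frac{1}{2}S-\frac{1}{4}\mathbb{I}_4$. By the lemma stating that $\mathcal{D}$ and $\lambda\mathbb{I}+\mathcal{D}$ give the same metric (its proof shows that the commutators coincide), we get
\begin{equation*}
[\mathcal{D}_4,\pi(e)]=\tfrac{1}{2}[S,\mathbb{I}_2\otimes e].
\end{equation*}

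Next I would use $S(\mathbb{I}_2\otimes e)S=e\otimes\mathbb{I}_2$ to write
\begin{equation*}
[S,\mathbb{I}_2\otimes e]=S(\mathbb{I}_2\otimes e)-(\mathbb{I}_2\otimes e)S=(e\otimes\mathbb{I}_2-\mathbb{I}_2\otimes e)S.
\end{equation*}
Since $S$ is unitary, it follows that $\|[\mathcal{D}_4,\pi(e)]\|_{op}=\frac{1}{2}\|e\otimes\mathbb{I}_2-\mathbb{I}_2\otimes e\|_{op}$.

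For the remaining norm, take $e$ Hermitian and traceless with eigenvalues $\pm\lambda$, where $\lambda=\|e\|_{op}$, and eigenvectors $|\pm\rangle$. Then $e\otimes\mathbb{I}_2-\mathbb{I}_2\otimes e$ is diagonal in the product basis $|\pm\rangle\otimes|\pm\rangle$, with eigenvalues $\mu-\nu$ for $\mu,\nu\in\{\lambda,-\lambda\}$, namely $0,0,2\lambda,-2\lambda$. Its operator norm is therefore $2\lambda$, and hence $\|[\mathcal{D}_4,\pi(e)]\|_{op}=\lambda=\|e\|_{op}$. This also explains why tracelessness is needed: for a general Hermitian $e$ the same computation gives half the spectral spread of $e$, which coincides with $\|e\|_{op}$ only when the two eigenvalues are opposite.

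The only step needing real care is the swap identity $\sum_i\sigma_i\otimes\sigma_i=2S-\mathbb{I}$. If I wanted to avoid it, I would argue instead as follows. First, $U\otimes U$ commutes with $\mathcal{D}_4$, so the invariance in Theorem~\ref{lu} lets one reduce to $e=x\sigma_3$. Then compute directly, using $[\sigma_i,\sigma_j]=2i\epsilon_{ijk}\sigma_k$:
\begin{equation*}
[\mathcal{D}_4,\pi(x\sigma_3)]=\tfrac{ix}{2}(\sigma_2\otimes\sigma_1-\sigma_1\otimes\sigma_2).
\end{equation*}
The operator $M=\sigma_2\otimes\sigma_1-\sigma_1\otimes\sigma_2$ satisfies $M^2=2\mathbb{I}-2\sigma_3\otimes\sigma_3$, which has largest eigenvalue $4$. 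So $\|M\|_{op}=2$, giving the same conclusion $\|[\mathcal{D}_4,\pi(x\sigma_3)]\|_{op}=|x|=\|x\sigma_3\|_{op}$.
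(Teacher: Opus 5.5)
Your proof is correct, but it takes a different route from the paper's. The paper stays in the Bloch parametrization $e=\vec e\cdot\vec\sigma$. It uses the Pauli product rules to compute directly $[\mathcal{D}_4,\pi(e)]^{\dag}[\mathcal{D}_4,\pi(e)]=\tfrac12(|\vec e|^2\mathbb{I}_4-e\otimes e)$, and reads off the eigenvalues $0,0,|\vec e|^2,|\vec e|^2$ from the spectrum of $e\otimes e$. You instead recognize $\mathcal{D}_4=\tfrac12 S-\tfrac14\mathbb{I}_4$, which factors the commutator as a unitary times $\tfrac12(e\otimes\mathbb{I}_2-\mathbb{I}_2\otimes e)$. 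Squaring your factorization gives $\tfrac14(e\otimes\mathbb{I}_2-\mathbb{I}_2\otimes e)^2$, which reduces to the paper's expression once $e^2=|\vec e|^2\mathbb{I}_2$; the two computations therefore meet at the same operator.

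Your route buys several things:
\begin{itemize}
\item No ``straightforward calculation'' is left implicit.
\item The formula $\|[\mathcal{D}_4,\pi(e)]\|_{op}=\tfrac12(\lambda_{\max}-\lambda_{\min})$ holds for every Hermitian $e$, and shows exactly where tracelessness enters.
\item The $U\otimes U$ symmetry of $\mathcal{D}_4$ becomes manifest. That symmetry is what underlies the unitary invariance of the seminorm used later in that section.
\item The argument carries over verbatim to the swap on $\mathbb{C}^n\otimes\mathbb{C}^n$. There the half-spread can differ from $\|e\|_{op}$ when $n\geq 3$ (e.g.\ eigenvalues $2,-1,-1$), in line with the paper's remark that $\mathbb{M}_2(\mathbb{C})$ is special.
\end{itemize}
The paper's route has the advantage of staying in the Bloch-vector language it uses for the subsequent corollary.

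One small correction to your backup argument: Theorem~\ref{lu} takes unitary invariance of the seminorm as a hypothesis, so it is not what justifies reducing to $e=x\sigma_3$. The reduction follows because conjugating the commutator by $U\otimes U$ preserves the operator norm, fixes $\mathcal{D}_4$, and sends $\mathbb{I}_2\otimes e$ to $\mathbb{I}_2\otimes UeU^{\dag}$. You already state these facts, so only the citation needs to change.
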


\begin{proof}
	Any $2 \times 2$ traceless Hermitian matrix $e$ can be
expressed by the Pauli matrices as
\begin{equation}
	e = \vec{e} \cdot \vec{\sigma} = e_1 \sigma_1 + e_2 \sigma_2 + e_3 \sigma_3,
\end{equation}
where $\vec{e} = (e_1, e_2, e_3)$ is a real vector.
Using the following formulas of the Pauli matrices
\begin{equation}
	\sigma_i \sigma_j = \delta_{i j} \mathbb{I}+ \text{i} \varepsilon_{i j k}
	\sigma_k,
\end{equation}
\begin{equation}
	(\vec{a} \cdot \vec{\sigma})  (\vec{b} \cdot \vec{\sigma}) = (\vec{a} \cdot
	\vec{b}) \mathbb{I}_2 + \text{i}  (\vec{a} \times \vec{b}) \cdot
	\vec{\sigma},
\end{equation}
\begin{equation}
	[\vec{a} \cdot \vec{\sigma}, \vec{b} \cdot \vec{\sigma}] = 2 \text{i}
	(\vec{a} \times \vec{b}) \cdot \vec{\sigma},
\end{equation}
after some straightforward calculations, one can obtain
\begin{equation}
	e^{\dag} e = | \vec{e}  |^2 \cdot \mathbb{I}_2,
\end{equation}
\begin{equation}
	[\mathcal{D}_4, \mathbb{I}_2 \otimes e]^{\dag} [\mathcal{D}_4, \mathbb{I}_2
	\otimes e] = \frac{1}{2} (| \vec{e}  |^2 \cdot \mathbb{I}_4 - e \otimes e).
\end{equation}
The eigenvalues of $e = \vec{e} \cdot \vec{\sigma}$ are $\{ | \vec{e} |,- | \vec{e} | \}$.
Since the eigenvalues of a tensor product of two matrices $A \otimes B$ are the
products of the eigenvalues of $A$ and the eigenvalues of $B$, the eigenvalues of $e \otimes e$ are $\{| \vec{e} |^2, | \vec{e} |^2, - | \vec{e} |^2,
- | \vec{e} |^2 \}$.
So the eigenvalues of $\frac{1}{2}  (| \vec{e} |^2 \mathbb{I}_4 - e \otimes e)$
are $\{0, 0, | \vec{e} |^2, | \vec{e} |^2 \}$.

Therefore, we have
\begin{equation}
  \|[\mathcal{D}_4, \pi (e)]\|_{op} = | \vec{e} | = \|e\|_{op} .
\end{equation}
\end{proof}

Consider any one-qubit states $\rho_1, \rho_2$, and the corresponding Bloch
vectors are $\vec{r}_1$, $\vec{r}_2$, respectively.
From the above lemma and Theorem (\ref{t6}), one can obtain the Connes spectral distance,
\begin{equation}
	d (\rho_1, \rho_2) = d_{\mathrm{tr}} (\rho_1, \rho_2) = | \Delta \vec{r} | = \mathrm{tr} (\Delta
	\rho e_o) .
\end{equation}
In this case, there is
\begin{equation}
	\Delta \rho =\rho_1-\rho_2= \frac{1}{2} (\vec{r}_1-\vec{r}_2) \cdot \vec{\sigma}= \frac{1}{2} \Delta \vec{r} \cdot \vec{\sigma},
\end{equation}
and the traceless optimal element can be chosen as
\begin{equation}
	e_o = \frac{\Delta \vec{r}}{| \Delta \vec{r} |} \cdot \vec{\sigma} =
	\frac{\Delta \rho}{\| \Delta \rho \|_{op}}.
\end{equation}
The optimal element set $B_o$ is
\begin{equation}
	B_o = \{ \vec{n} \cdot \vec{\sigma} : | \vec{n} | = 1\} .
\end{equation}
Since a unitary transformation of $\vec{n} \cdot \vec{\sigma}$ is just some
rotation of the unit vector $\vec{n}$, the set $B_o$ is unitary invariant.

Therefore, we have the following result.
\begin{corollary}
  In the spectral triple $(\mathbb{M}_2 (\mathbb{C}), \mathbb{C}^4,
  \mathcal{D}_4)$, the Connes spectral distance between any one-qubit states
  $\rho_1, \rho_2$ is equal to their quantum trace distance and also the Euclidean
  distance between the Bloch vectors,
  \begin{equation}
    d (\rho_1, \rho_2) = d_{\mathrm{tr}} (\rho_1, \rho_2) = | \vec{r}_1 - \vec{r}_2 |,
  \end{equation}
  where $\vec{r}_1$, $\vec{r}_2$ are the corresponding Bloch vectors of
  $\rho_1, \rho_2$, respectively. The corresponding optimal element $e_o$ is
  \begin{equation}
    e_o = \frac{\Delta \vec{r}}{| \Delta \vec{r} |} \cdot \vec{\sigma}= \frac{\Delta \rho}{\| \Delta \rho \|_{op}} .
  \end{equation}
\end{corollary}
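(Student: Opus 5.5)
The corollary combines the preceding lemma with Theorem \ref{t6}. The one gap is that the lemma gives $\|[\mathcal{D}_4,\pi(e)]\|_{op}=\|e\|_{op}$ only for traceless Hermitian $e$, while Theorem \ref{t6} assumes it for every $e\in\mathcal{A}$. So I would not quote Theorem \ref{t6} verbatim. Instead I would rerun its argument, using the reductions of Section~\ref{sec3} to shrink the set of elements that need to be considered.

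\textbf{Step 1 (reduction).} By the remark citing \cite{Iochum}, the supremum defining $d(\rho_1,\rho_2)$ may be taken over Hermitian $e\in B$. Write such an $e$ as $e=\lambda\mathbb{I}+\vec e\cdot\vec\sigma$. The representation $\pi(a)=\mathbb{I}_2\otimes a$ is unital, so
\[
[\mathcal{D}_4,\pi(e)]=[\mathcal{D}_4,\pi(\vec e\cdot\vec\sigma)] .
\]
Also $\mathrm{tr}(\Delta\rho)=0$, so $\mathrm{tr}(\Delta\rho\, e)=\mathrm{tr}(\Delta\rho\,\vec e\cdot\vec\sigma)$. This is the argument of Lemma \ref{tloe}. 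Hence we may restrict to traceless Hermitian $e$. For these the lemma gives
\[
B\cap\{\text{traceless Hermitian}\}=\{\vec e\cdot\vec\sigma:\ |\vec e|\leqslant 1\}.
\]

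\textbf{Step 2 (compute the supremum).} Put $\Delta\rho=\frac12\Delta\vec r\cdot\vec\sigma$. The identity $(\vec a\cdot\vec\sigma)(\vec b\cdot\vec\sigma)=(\vec a\cdot\vec b)\mathbb{I}_2+\mathrm{i}(\vec a\times\vec b)\cdot\vec\sigma$ together with tracelessness of the Pauli matrices gives
\[
\mathrm{tr}(\Delta\rho\,\vec e\cdot\vec\sigma)=\Delta\vec r\cdot\vec e .
\]
By Cauchy--Schwarz, the supremum of this over $|\vec e|\leqslant1$ equals $|\Delta\vec r|$. It is attained at $\vec e=\Delta\vec r/|\Delta\vec r|$ when $\rho_1\neq\rho_2$; when $\rho_1=\rho_2$ every element is optimal and the distance is $0$. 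So
\[
d(\rho_1,\rho_2)=|\vec r_1-\vec r_2|
\]
and $e_o=\frac{\Delta\vec r}{|\Delta\vec r|}\cdot\vec\sigma$.

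\textbf{Step 3 (identifications).} The matrix $\Delta\rho$ has eigenvalues $\pm\frac12|\Delta\vec r|$. Therefore
\[
\|\Delta\rho\|_1=|\Delta\vec r|=d_{\mathrm{tr}}(\rho_1,\rho_2),
\]
recalling that the paper drops the factor $\frac12$ in $d_{\mathrm{tr}}$. Likewise $\|\Delta\rho\|_{op}=\frac12|\Delta\vec r|$, so
\[
\frac{\Delta\rho}{\|\Delta\rho\|_{op}}=\frac{\Delta\vec r}{|\Delta\vec r|}\cdot\vec\sigma=e_o .
\]
As an alternative route to $d=d_{\mathrm{tr}}$, note that restricted to traceless Hermitian elements the ball is exactly $\{\|e\|_{op}\leqslant1\}$. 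The optimizer in Theorem \ref{dtr} is $e_o$ from (\ref{oe}), which is traceless for $n=2$, so the supremum in Theorem \ref{dtr} is also achieved inside this restricted ball.

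\textbf{Main obstacle.} The delicate step is Step 1. It must justify that dropping the identity component of $e$ and restricting to Hermitian elements changes neither the constraint nor the objective, since that is what lets the lemma, valid only for traceless Hermitian elements, stand in for the full hypothesis of Theorem \ref{t6}. Once that is settled, Steps 2 and 3 are routine Pauli-matrix computations.
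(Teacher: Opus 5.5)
Your proof is correct and follows the paper's route. The paper obtains the corollary by combining the preceding lemma with Theorem~\ref{t6} and writing $\Delta\rho=\frac12\Delta\vec r\cdot\vec\sigma$; your Cauchy--Schwarz evaluation in Bloch coordinates simply replaces the trace-norm duality used inside Theorem~\ref{t6}. Your Step~1 makes the argument more rigorous: the lemma gives $\|[\mathcal{D}_4,\pi(e)]\|_{op}=\|e\|_{op}$ only for traceless Hermitian $e$, so applying Theorem~\ref{t6} tacitly needs the Hermitian reduction and the identity-shift argument of Lemma~\ref{tloe}, which the paper leaves implicit and you spell out.
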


It is easy to see that, for any unitary matrix $U \in \mathbb{M}_2
(\mathbb{C})$, the Dirac operator $\mathcal{D}_4' = \pi (U) \mathcal{D}_4 \pi
(U^{\dag}) = (\mathbb{I}_2 \otimes U) \mathcal{D}_4  (\mathbb{I}_2 \otimes
U^{\dag})$ also satisfies the property $\|[\mathcal{D}'_4, \pi (e)]\|_{op} =
\|e\|_{op}$ for any traceless Hermitian matrix $e$,
\begin{eqnarray}
  \|[\mathcal{D}'_4, \pi (e)]\|_{op} & = & \|[\pi (U)\mathcal{D}_4 \pi
  (U^{\dag}), \pi (e)]\|_{op}\nonumber\\
  & = & \|[\mathcal{D}_4, \pi (U^{\dag}) \pi (e) \pi (U)]\|_{op}\nonumber\\
  & = & \|[\mathcal{D}_4, \pi (U^{\dag} e U)]\|_{op}\nonumber\\
  & = & \|U^{\dag} e U\|_{op}\nonumber\\
  & = & \|e\|_{op} .
\end{eqnarray}
For example, there are $2 \times 2$ unitary matrices $U_+, U_-$ which can
permutate the Pauli matrices,
\begin{equation}
  U_+ = \frac{1}{2} \left( \begin{array}{cc}
    1 - i & - 1 - i\\
    1 - i & 1 + i
  \end{array} \right), \quad U_- = \frac{1}{2} \left( \begin{array}{cc}
    1 + i & 1 + i\\
    - 1 + i & 1 - i
  \end{array} \right),
\end{equation}
and
\begin{equation}
  U_+ \sigma_1 U_+^{\dag} = \sigma_2, \quad U_+ \sigma_2 U_+^{\dag} =
  \sigma_3, \quad U_+ \sigma_3 U_+^{\dag} = \sigma_1 ;
\end{equation}
\begin{equation}
  U_- \sigma_3 U_-^{\dag} = \sigma_2, \quad U_- \sigma_2 U_-^{\dag} =
  \sigma_1, \quad U_- \sigma_1 U_-^{\dag} = \sigma_3 .
\end{equation}
Therefore, using the Dirac operator $\mathcal{D}_4$ (\ref{dd4}) and the above unitary matrices $U_{\pm}$, one can obtain the following result.
\begin{lemma}
  For any traceless Hermitian matrix $e \in \mathbb{M}_2 (\mathbb{C})$, the
  Dirac operators
  \begin{equation}\label{d4p}
    \mathcal{D}_4' = \frac{1}{4}  \sum_{i = 1}^3 \pm \sigma_i \otimes
    \sigma_{i'}
  \end{equation}
  satisfy the relation $\|[\mathcal{D}_4', \pi (e)]\|_{op} = \|e\|_{op}$,
  where $\sigma_i$ are the Pauli matrices, and $i, i'$ are any permutations of
  $\{1, 2, 3\}$.
\end{lemma}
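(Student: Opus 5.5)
The idea is to reduce every operator $\mathcal{D}_4'$ in (\ref{d4p}) to $\mathcal{D}_4$ (\ref{dd4}) through operations that do not change the Lipschitz seminorm $\|[\cdot,\pi(e)]\|_{op}$. We then apply the preceding lemma, $\|[\mathcal{D}_4,\pi(e)]\|_{op}=\|e\|_{op}$ for traceless Hermitian $e$.

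First write any such operator as
\begin{equation*}
\mathcal{D}_M=\frac14\sum_{i,j}M_{ij}\,\sigma_i\otimes\sigma_j ,
\end{equation*}
where $M$ is a signed permutation matrix, so $M\in O(3)$ and $\mathcal{D}_4=\mathcal{D}_{\mathbb{I}_3}$. We use three invariances of the seminorm.

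(i) Conjugation of the second factor. For a unitary $U\in\mathbb{M}_2(\mathbb{C})$, the computation displayed just before the statement gives $\|[\pi(U)\mathcal{D}\pi(U^\dag),\pi(e)]\|_{op}=\|[\mathcal{D},\pi(U^\dag eU)]\|_{op}$. Since $U^\dag eU$ is again traceless Hermitian with the same operator norm, the property $\|[\mathcal{D},\pi(e)]\|_{op}=\|e\|_{op}$ passes from $\mathcal{D}$ to $\pi(U)\mathcal{D}\pi(U^\dag)$.

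(ii) Conjugation of the first factor. For a unitary $W\in\mathbb{M}_2(\mathbb{C})$, the operator $W\otimes\mathbb{I}_2$ commutes with $\pi(e)=\mathbb{I}_2\otimes e$. Hence
\begin{equation*}
[(W\otimes\mathbb{I})\mathcal{D}(W\otimes\mathbb{I})^\dag,\pi(e)]=(W\otimes\mathbb{I})[\mathcal{D},\pi(e)](W\otimes\mathbb{I})^\dag ,
\end{equation*}
which has the same operator norm as $[\mathcal{D},\pi(e)]$.

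(iii) Overall sign. Replacing $\mathcal{D}$ by $-\mathcal{D}$ leaves the operator norm of the commutator unchanged; this is the case $\lambda=-1$ of the scaling lemma.

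Next we translate these operations into operations on $M$. Conjugation by a unitary acts on $\vec a\cdot\vec\sigma$ as a rotation $R\in SO(3)$, and every rotation arises this way; the matrices $U_\pm$ above realize the cyclic permutations. So (i) sends $\mathcal{D}_M$ to $\mathcal{D}_{MR^T}$ (up to the transpose convention), (ii) sends it to $\mathcal{D}_{R'M}$, and (iii) sends it to $\mathcal{D}_{-M}$.

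Now split by the sign of $\det M$. If $\det M=1$, then $M\in SO(3)$ and a single conjugation of type (i) turns $\mathcal{D}_4$ into $\mathcal{D}_M$. This covers, for example, the cyclic permutations and an even number of sign flips. If $\det M=-1$, then $-M\in SO(3)$ because $\det(-\mathbb{I}_3)=-1$. We obtain $\mathcal{D}_{-M}$ from $\mathcal{D}_4$ by (i) and then apply (iii). This covers transpositions and odd numbers of sign flips, and (ii) is not even needed.

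The main obstacle is the $\det M=-1$ case. A signed permutation such as $\mathrm{diag}(-1,1,1)$, or a transposition, cannot be produced by unitary conjugation, since conjugation only gives $SO(3)$, and treating it directly would require redoing the eigenvalue computation of the preceding lemma. The overall-sign trick (iii) resolves this without new calculation. As a sanity check, one could verify directly for $M=-\mathbb{I}_3$ that
\begin{equation*}
[\mathcal{D},\pi(e)]^\dag[\mathcal{D},\pi(e)]=\tfrac12\bigl(|\vec e|^2\mathbb{I}_4-e\otimes e\bigr)
\end{equation*}
still holds, because this expression is quadratic in $\mathcal{D}$.
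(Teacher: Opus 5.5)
Your proof is correct. Its core is the paper's own argument: the paper derives this lemma from the remark just before it, that $\pi(U)\mathcal{D}_4\pi(U^{\dag})$ inherits the property $\|[\cdot,\pi(e)]\|_{op}=\|e\|_{op}$, together with the unitaries $U_\pm$ that permute the Pauli matrices. That is your step (i).

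The difference is that you notice conjugation by $\pi(U)$ only realizes $M\in SO(3)$, and $U_\pm$ themselves give only the two cyclic permutations. So transpositions, and sign patterns with $\det M=-1$ such as a single minus sign, are covered by the statement but never actually reached by the paper's argument. Your observation that $-M\in SO(3)$ whenever $\det M=-1$, combined with the trivial invariance under $\mathcal{D}\to-\mathcal{D}$, closes exactly this gap without redoing the eigenvalue computation. In fact it gives the property for every $\mathcal{D}_M$ with $M\in O(3)$, not just signed permutations.

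One small point: the parenthetical ``this covers transpositions and odd numbers of sign flips'' is only illustrative. A transposition combined with one sign flip has $\det M=+1$ and belongs to your first case. Since your case split is by $\det M$, this does not affect the proof.
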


\begin{theorem}
  Consider the Dirac operator,
  \begin{equation}
    \mathcal{D}_{4^n} = \frac{1}{4}  \sum_{i = 1}^3 \sigma_{i_n, i_n'}
    \otimes \cdots \otimes \sigma_{i_1, i_1'},
  \end{equation}
  where $i_k, i_k'$ are any permutations of $\{1, 2, 3\}$, and $\sigma_{i_k,
  i_k'} = \pm \sigma_{i_k} \otimes \sigma_{i_k'}$. Then for any $2 \times 2$
  traceless Hermitian matrix $e \in \mathbb{M}_2 (\mathbb{C})$, we have
  \begin{equation}
    \|[\mathcal{D}_{4^n}, \pi_n (e)]\|_{op} = \|e\|_{op},
  \end{equation}
  where $\pi_n (e) =\mathbb{I}_{4^{n - 1}} \otimes \mathbb{I}_2 \otimes e$.
\end{theorem}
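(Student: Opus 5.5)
The plan is to reduce the statement to the case $n=1$, that is, to the preceding Lemma on $\mathcal{D}_4'$ (\ref{d4p}), by a unitary change of basis that commutes with $\pi_n(e)$. I read $\mathcal{D}_{4^n}$ as acting on $\mathbb{C}^{4^n}=(\mathbb{C}^2)^{\otimes 2n}$. The operator $e$ acts only on the last tensor slot, which carries $\sigma_{i_1'}$. Accordingly I write
\begin{equation*}
\mathcal{D}_{4^n}=\frac14\sum_{i=1}^3 A_i\otimes \sigma_{i_1'},
\end{equation*}
where $A_i$ is the signed Pauli string $\pm\sigma_{i_n}\otimes\sigma_{i_n'}\otimes\cdots\otimes\sigma_{i_1}$ occupying the first $2n-1$ slots. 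It follows that
\begin{equation*}
[\mathcal{D}_{4^n},\pi_n(e)]=\frac14\sum_i A_i\otimes[\sigma_{i_1'},e].
\end{equation*}

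The first step is to verify that the three operators $A_1,A_2,A_3$ satisfy the Pauli algebra up to signs. Each $A_i$ is a tensor product of Hermitian unitaries, so $A_i^2=\mathbb{I}$. For $i\neq j$, the strings $A_i$ and $A_j$ carry distinct Pauli matrices in each of the $2n-1$ slots, because $i_k,i_k'$ are permutations. Distinct Pauli matrices anticommute, so $A_iA_j=(-1)^{2n-1}A_jA_i=-A_jA_i$. Moreover, $A_1A_2A_3$ is a tensor product of $2n-1$ factors of the form $\sigma_a\sigma_b\sigma_c$ with $\{a,b,c\}=\{1,2,3\}$, each equal to $\pm\mathrm{i}\,\mathbb{I}_2$, so $A_1A_2A_3=\pm\mathrm{i}\,\mathbb{I}$.

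The second step uses the representation theory of the complex Clifford algebra $Cl_3$, equivalently of $\mathbb{M}_2(\mathbb{C})$, since $A_1,A_2$ generate a copy of $\mathbb{M}_2(\mathbb{C})$ and $A_3$ is then fixed by the product relation. From this there is a unitary $W$ on $\mathbb{C}^{2^{2n-1}}$ with $WA_iW^{\dag}=\mathbb{I}_{m}\otimes\varepsilon_i\sigma_{\tau(i)}$, where $m=4^{n-1}$, $\varepsilon_i=\pm1$, and $\tau$ is the permutation induced by the first slot; the overall sign of $A_1A_2A_3$ is absorbed into the $\varepsilon_i$. Since $W\otimes\mathbb{I}_2$ commutes with $\pi_n(e)=\mathbb{I}\otimes e$, conjugation gives
\begin{equation*}
(W\otimes\mathbb{I}_2)\,[\mathcal{D}_{4^n},\pi_n(e)]\,(W\otimes\mathbb{I}_2)^{\dag}=\mathbb{I}_m\otimes[\mathcal{D}_4',\pi(e)],
\qquad
\mathcal{D}_4'=\frac14\sum_i\varepsilon_i\,\sigma_{\tau(i)}\otimes\sigma_{i_1'}.
\end{equation*}
This $\mathcal{D}_4'$ is exactly of the form (\ref{d4p}).

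The last step combines two facts. The operator norm is unitarily invariant, and the norm of $\mathbb{I}_m\otimes X$ equals $\|X\|_{op}$. The previous Lemma then gives $\|[\mathcal{D}_{4^n},\pi_n(e)]\|_{op}=\|[\mathcal{D}_4',\pi(e)]\|_{op}=\|e\|_{op}$ for every traceless Hermitian $e$.

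The main obstacle is the second step, namely justifying the existence of $W$ cleanly, including the sign bookkeeping. If invoking the classification feels too heavy, I would instead compute directly. Set $C=[\mathcal{D}_{4^n},\pi_n(e)]$ and expand $C^{\dag}C=\frac1{16}\sum_{i,j}A_iA_j\otimes[\sigma_{i_1'},e]^{\dag}[\sigma_{j_1'},e]$. This expression uses only the relations $A_i^2=\mathbb{I}$, $A_iA_j=\pm\mathrm{i}\,\varepsilon_{ijk}A_k$ (with a fixed global sign pattern), which are the same relations as in the $n=1$ computation. It therefore yields an analogue of $\frac12(|\vec e|^2\mathbb{I}-\tilde e\otimes e)$, with $\tilde e=\sum_k \pm e_kA_k$ satisfying $\tilde e^2=|\vec e|^2\mathbb{I}$. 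The eigenvalues are then $0$ and $|\vec e|^2$, so the operator norm is again $|\vec e|=\|e\|_{op}$.
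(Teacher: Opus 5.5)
Your proof is correct, but it reaches the $n=1$ Lemma through a different decomposition and a different key fact than the paper.

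The paper keeps the last \emph{pair} of slots together. After permuting indices with $U_\pm$, it uses the fact that the outer strings $\sigma_{i_n,i_n'}\otimes\cdots\otimes\sigma_{i_2,i_2'}$ pairwise \emph{commute} (an even number of anticommuting factors). It diagonalizes them simultaneously with the explicit Bell-type unitary $U$, so the commutator becomes block diagonal with blocks $[D_k,\pi_1(e)]$, where $D_k=\frac14\sum_i\pm\sigma_i\otimes\sigma_i$ with varying sign patterns. This is done only for $n=2$; general $n$ is left to ``similar methods''.

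You split off only the last slot. Your $(2n-1)$-slot strings $A_i$ therefore \emph{anticommute} (odd parity) and form a Pauli triple. Uniqueness of the irreducible representation of $\mathbb{M}_2(\mathbb{C})$ then gives a single unitary $W\otimes\mathbb{I}_2$, commuting with $\pi_n(e)$, that turns the commutator into $\mathbb{I}_{4^{n-1}}\otimes[\mathcal{D}_4',\pi(e)]$ with all blocks identical.

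What each route buys:
\begin{itemize}
\item The paper's mechanism is more elementary: it only needs simultaneous diagonalization of commuting Hermitian involutions, and this extends to every $n$ by the same parity observation.
\item Your route treats all $n$ in one uniform step, at the price of quoting the representation theory of $Cl_2(\mathbb{C})\cong\mathbb{M}_2(\mathbb{C})$.
\item Your fallback computation avoids even that, and it goes through. One finds $C^{\dag}C=\frac12\bigl(|\vec e|^2\mathbb{I}\pm\tilde e\otimes e\bigr)$ with $\tilde e=\sum_k e_{i_1'(k)}A_k$. The coefficients are indexed through $i_1'$ rather than as $e_kA_k$, which is harmless since still $\tilde e^2=|\vec e|^2\mathbb{I}$ and $\tilde e$ is traceless.
\end{itemize}

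Both proofs rely on the Lemma for $\mathcal{D}_4'$ with arbitrary signs and arbitrary (including odd) permutations. The paper justifies this only via the cyclic $U_\pm$, but it does hold in that generality. A signed permutation coefficient matrix lies in $O(3)=SO(3)\cup(-SO(3))$. Conjugating the first tensor factor fixes $\pi(e)$ and realizes all of $SO(3)$ starting from $\mathcal{D}_4$. Replacing $\mathcal{D}$ by $-\mathcal{D}$ leaves $\|[\mathcal{D},\pi(e)]\|_{op}$ unchanged and covers the rest.
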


\begin{proof}
	When $n = 1$, the Dirac operator $\mathcal{D}_{4}$ is just $\mathcal{D}_4'$ (\ref{d4p}).

When $n = 2$, we have
\begin{equation}
	\mathcal{D}_{16} = \frac{1}{4}  \sum_{i = 1}^3 \sigma_{i_2, i_2'} \otimes
	\sigma_{i_1, i_1'},
\end{equation}
and
\begin{eqnarray}
	\|[\mathcal{D}_{16}, \pi_2 (e)]\|_{op} & = & \left\| \left[ \frac{1}{4} 
	\sum_{i = 1}^3 \sigma_{i_2, i_2'} \otimes \sigma_{i_1, i_1'}, \mathbb{I}_4
	\otimes \pi_1 (e) \right] \right\|_{op} \nonumber\\
	& = & \frac{1}{4} \left\| \sum_{i = 1}^3 \sigma_{i_2, i_2'} \otimes
	[\sigma_{i_1, i_1'}, \pi_1 (e)] \right\|_{op} . 
\end{eqnarray}
Using the unitary matrices $U_{\pm}$, one can permutate the Pauli matrices. So
one can only consider the following simplest case,
\begin{equation}
	\|[\mathcal{D}_{16}, \pi_2 (e)]\|_{op} = \frac{1}{4} \left\| \sum_{i = 1}^3
	\sigma_{i, i} \otimes [\sigma_{i, i}, \pi_1 (e)] \right\|_{op} .
\end{equation}
There is the following $4 \times 4$ unitary matrix
\begin{equation}
	U = \frac{1}{\sqrt{2}} \left( \begin{array}{cccc}
		1 & 0 & 0 & - 1\\
		0 & 1 & - 1 & 0\\
		0 & 1 & 1 & 0\\
		1 & 0 & 0 & 1
	\end{array} \right),
\end{equation}
and we have
\begin{eqnarray}
	\|[\mathcal{D}_{16}, \pi_2 (e)]\|_{op} 
	& = & \left\| (U \otimes \mathbb{I}_4) \left( \frac{1}{4}  \sum_{i = 1}^3
	\sigma_{i, i} \otimes [\sigma_{i, i}, \pi_1 (e)] \right) (U^{\dag} \otimes
	\mathbb{I}_4) \right\|_{op} \\
	& = & \left\| \left( \begin{array}{cccc}
		{}[D_1, \pi_1 (e)] & 0 & 0 & 0\\
		0 & [D_2, \pi_1 (e)] & 0 & 0\\
		0 & 0 & [D_3, \pi_1 (e)] & 0\\
		0 & 0 & 0 & [D_4, \pi_1 (e)]
	\end{array} \right) \right\|_{op}, \nonumber
\end{eqnarray}
where
\begin{equation}
	D_k = \frac{1}{4}  \sum_{i = 1}^3 \sigma_{i, i} = \frac{1}{4}  \sum_{i =
		1}^3 \pm \sigma_i \otimes \sigma_i, \quad k = 1, 2, 3, 4,
\end{equation}
and different $D_k$ has different signs of some terms, but they all have the
same relation
\begin{equation}
	\|[D_k, \pi_1 (e)]\|_{op} =\|[\mathcal{D}_4, \pi_1 (e)]\|_{op}= \|e\|_{op}, \quad k = 1, 2, 3, 4 .
\end{equation}
So we have
\begin{equation}
	\|[\mathcal{D}_{16}, \pi_2 (e)]\|_{op} = \|[\mathcal{D}_4, \pi_1 (e)]\|_{op}
	= \|e\|_{op} . 
\end{equation}
Using the similar methods, one can easily find that, for any $n \geqslant 1$, there is
\begin{equation}
	\|[\mathcal{D}_{4^n}, \pi_n (e)]\|_{op} = \|[\mathcal{D}_4, \pi_1 (e)]\|_{op}
	= \|e\|_{op} .
\end{equation}
\end{proof}

For example, one can check that, with the following Dirac operator
\begin{equation}
	\mathcal{D}_{16} = \frac{1}{4}  \sum_{i = 1}^3 \sigma_i \otimes \sigma_i
	\otimes \sigma_i \otimes \sigma_i, \quad {\pi_2}  (e) =\mathbb{I}_8 \otimes
	e,
\end{equation}
for any $2 \times 2$
traceless Hermitian matrix $e \in \mathbb{M}_2 (\mathbb{C})$, there is
\begin{equation}
	\|[\mathcal{D}_{16}, \pi_2 (e)]\|_{op} = \|e\|_{op} .
\end{equation}

\begin{lemma}
  Consider a matrix $A = \sum_{i, j} S_i \otimes T_j$, where $S_i$ are some $k
  \times k$ matrices, and $T_j$ are some $l \times l$ matrices, and define a
  matrix
  \begin{equation}
    A' = \sum_{i, j} S_i \otimes \tilde{M} \otimes T_j,
  \end{equation}
  where $M$ is any non-zero $n \times n$ matrix, and $\tilde{M} =
  \frac{M}{\|M\|_{op}}$, then there is $\|A\|_{op} = \|A' \|_{op}$.
\end{lemma}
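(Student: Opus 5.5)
The plan is to show that $A'$ is unitarily equivalent to $A\otimes\tilde{M}$, and then to use that the operator norm is multiplicative on tensor products. This argument does not need the double sum to factor as $(\sum_i S_i)\otimes(\sum_j T_j)$. The only feature used is that every term carries the same middle factor $\tilde{M}$. So the argument works equally if $A$ is a general sum $\sum_k S_k\otimes T_k$ of paired terms.

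\textbf{Step 1 (reordering the tensor factors).} Let $P$ be the permutation (swap) unitary on $\mathbb{C}^k\otimes\mathbb{C}^l\otimes\mathbb{C}^n$ defined on product vectors by
\begin{equation}
P(x\otimes y\otimes z)=x\otimes z\otimes y .
\end{equation}
By linearity, $P$ is a unitary map from $\mathbb{C}^k\otimes\mathbb{C}^l\otimes\mathbb{C}^n$ onto $\mathbb{C}^k\otimes\mathbb{C}^n\otimes\mathbb{C}^l$. For any matrices $S,T,N$ one checks on product vectors that
\begin{equation}
P\,(S\otimes T\otimes N)\,P^{\dag}=S\otimes N\otimes T .
\end{equation}
Summing over all terms then gives
\begin{equation}
A'=P\,(A\otimes\tilde{M})\,P^{\dag}.
\end{equation}

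\textbf{Step 2 (unitary invariance).} The operator norm is invariant under unitary conjugation. Hence
\begin{equation}
\|A'\|_{op}=\|A\otimes\tilde{M}\|_{op}.
\end{equation}

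\textbf{Step 3 (multiplicativity on tensor products).} I will prove $\|X\otimes Y\|_{op}=\|X\|_{op}\|Y\|_{op}$ by the same eigenvalue argument the paper used for $e\otimes e$. We have
\begin{equation}
(X\otimes Y)^{\dag}(X\otimes Y)=X^{\dag}X\otimes Y^{\dag}Y .
\end{equation}
The eigenvalues of this tensor product are the products of the eigenvalues of $X^{\dag}X$ and of $Y^{\dag}Y$. These are all nonnegative, so the largest one is the product of the two largest ones. Taking square roots gives the claim. Applying it with $Y=\tilde{M}$, and using $\|\tilde{M}\|_{op}=\|M\|_{op}/\|M\|_{op}=1$ (valid because $M\neq0$), yields
\begin{equation}
\|A'\|_{op}=\|A\|_{op}.
\end{equation}

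\textbf{Main obstacle.} The only point requiring care is Step 1. One has to make sure that a single permutation $P$ works simultaneously for every term in the sum. It does, because $P$ depends only on the dimensions $k,l,n$ and not on the matrices involved. Once that is settled, the rest follows directly from the earlier observation about eigenvalues of tensor products.
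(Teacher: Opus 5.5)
Your proposal is correct and follows essentially the same route as the paper: a single permutation unitary reorders the tensor factors so that $A'$ becomes unitarily equivalent to a tensor product of $A$ with $\tilde{M}$ (the paper moves $\tilde{M}$ to the front, you move it to the back), and then multiplicativity of the operator norm together with $\|\tilde{M}\|_{op}=1$ finishes the argument. Your eigenvalue check of $\|X\otimes Y\|_{op}=\|X\|_{op}\|Y\|_{op}$ simply supplies a step the paper uses without proof.
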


\begin{proof}
	There exists some $knl \times knl$ unitary permutation
matrix $P$,
\begin{equation}
  P (S_i \otimes \tilde{M} \otimes T_j) P^{\dag} = \tilde{M} \otimes S_i
  \otimes T_j .
\end{equation}
Obviously, there is $\| \tilde{M} \|_{op} = 1$, and
\begin{eqnarray}
  \|A' \|_{op} & = & \left\| \sum_{i, j} S_i \otimes \tilde{M} \otimes T_j
  \right\|_{op} \nonumber\\
  & = & \left\| P \left( \sum_{i, j} S_i \otimes \tilde{M} \otimes T_j
  \right) P^{\dag} \right\|_{op} \nonumber\\
  & = & \left\| \tilde{M} \otimes \sum_{i, j} S_i \otimes T_j \right\|_{op}
  \nonumber\\
  & = & \| \tilde{M} \otimes A\|_{op} = \| \tilde{M} \|_{op} \cdot \|A\|_{op}
  = \|A\|_{op} . 
\end{eqnarray}
\end{proof}

\begin{theorem}
  Consider a spectral triple $(\mathbb{M}_k (\mathbb{C}), \mathcal{H},
  \mathcal{D})$ with $\mathcal{H}=\mathbb{C}^{m k}$ and $\mathcal{D}= \sum_{i,
  j} S_i \otimes T_j$, $S_i$ are some $m' \times m'$ matrices, and $T_j$ are
  some $k' \times k'$ matrices with $k' \geqslant k$, and $mk = m' k'$. An
  element $a \in \mathbb{M}_k (\mathbb{C})$ acts on $\mathcal{H}$ through the
  diagonal representation $\pi (a) =\mathbb{I}_m \otimes a$. Then the
  corresponding spectral triple $(\mathbb{M}_k (\mathbb{C}), \mathcal{H}',
  \mathcal{D}')$ with
  \begin{equation}
    \mathcal{H}' =\mathbb{C}^n \otimes \mathcal{H}, \qquad \mathcal{D}' =
    \sum_{i, j} S_i \otimes \tilde{M} \otimes T_j
  \end{equation}
  has the same metric as $(\mathbb{M}_k (\mathbb{C}), \mathcal{H},
  \mathcal{D})$, where $M$ is any non-zero $n \times n$ Hermitian matrix,
  $\tilde{M} = \frac{M}{\|M\|_{op}}$, and an element $a \in \mathbb{M}_k
  (\mathbb{C})$ acts on $\mathcal{H}'$ through the diagonal representation
  $\pi' (a) =\mathbb{I}_{m n} \otimes a$.
\end{theorem}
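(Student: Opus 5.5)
The plan is to show that the two spectral triples have identical ball sets $B$, i.e. $\|[\mathcal{D}',\pi'(a)]\|_{op}=\|[\mathcal{D},\pi(a)]\|_{op}$ for every $a\in\mathbb{M}_k(\mathbb{C})$. Equal distances then follow immediately, since the supremum in the Connes distance is taken over the same set. The tool will be the preceding lemma on inserting a normalized factor $\tilde{M}$ into a sum of tensor products.

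First I compute the commutators. Since $mk=m'k'$ and $k'\geqslant k$, I will assume (as the statement implicitly requires) that $k$ divides $k'$, say $k'=qk$. Then $\pi(a)=\mathbb{I}_m\otimes a=\mathbb{I}_{m'}\otimes(\mathbb{I}_q\otimes a)$ splits along the same factorization $\mathbb{C}^{m'}\otimes\mathbb{C}^{k'}$ as $\mathcal{D}$. Writing $\hat a=\mathbb{I}_q\otimes a$, we get
\begin{equation*}
[\mathcal{D},\pi(a)]=\sum_{i,j}S_i\otimes[T_j,\hat a].
\end{equation*}
Likewise $\pi'(a)=\mathbb{I}_{mn}\otimes a=\mathbb{I}_{m'}\otimes\mathbb{I}_n\otimes\hat a$, up to the canonical identification $\mathbb{C}^n\otimes\mathcal{H}\cong\mathbb{C}^{m'}\otimes\mathbb{C}^n\otimes\mathbb{C}^{k'}$. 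That identification is a permutation unitary, so it does not change operator norms, and it is the same one used to define $\mathcal{D}'$. Hence
\begin{equation*}
[\mathcal{D}',\pi'(a)]=\sum_{i,j}S_i\otimes\tilde{M}\otimes[T_j,\hat a],
\end{equation*}
because $\tilde M$ commutes with $\mathbb{I}_n$.

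Next I apply the preceding lemma with $S_i$ and $T'_j:=[T_j,\hat a]$, which are $k'\times k'$ matrices. It gives $\|\sum S_i\otimes\tilde M\otimes T'_j\|_{op}=\|\sum S_i\otimes T'_j\|_{op}$, which is exactly the required equality of Lipschitz seminorms. So $B'=B$, and therefore $d_{\mathcal{D}'}(\rho_1,\rho_2)=\sup_{e\in B}|\mathrm{tr}(\Delta\rho\,e)|=d_{\mathcal{D}}(\rho_1,\rho_2)$ for all states. Hermiticity of $M$ makes $\mathcal{D}'$ self-adjoint whenever $\mathcal{D}$ is, so the new triple is legitimate.

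The main obstacle is the index bookkeeping. I need the representations $\mathbb{I}_m\otimes a$ and $\mathbb{I}_{mn}\otimes a$ to sit in the last tensor slot compatibly with the $S_i\otimes T_j$ splitting, and I need the position of the inserted $\mathbb{C}^n$ factor to match the ordering in $\mathbb{C}^n\otimes\mathcal{H}$. If $k\nmid k'$, the statement as written is ambiguous. In that case I would treat $a$ as acting on the last $k$-dimensional factor of the $T_j$ space, i.e. I would make the standing assumption $T_j\in\mathbb{M}_{k'}$ with $\mathbb{C}^{k'}=\mathbb{C}^{q}\otimes\mathbb{C}^k$, and handle the placement of $\mathbb{C}^n$ with the permutation unitary $P$ exactly as in the lemma's proof.
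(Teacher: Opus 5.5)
Your proposal is correct and follows the paper's proof: you write $[\mathcal{D}',\pi'(a)]=\sum_{i,j}S_i\otimes\tilde{M}\otimes[T_j,\cdot\,]$ and apply the preceding lemma to get equal Lipschitz seminorms, hence $B'=B$ and equal distances. The paper only treats $k'=k$, $m'=m$, while your substitution $\hat a=\mathbb{I}_q\otimes a$ also covers the case $k\mid k'$; you are also right that the statement only makes sense when $k$ divides $k'$.
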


\begin{proof}
	For simplicity, we only consider the case $k' = k$,
$m = m'$. There is
\begin{eqnarray}
  \|[\mathcal{D}', \pi' (e)]\|_{op} & = & \left\| \left[ \sum_{i, j} S_i
  \otimes \tilde{M} \otimes T_j, \mathbb{I}_{mn} \otimes e \right]
  \right\|_{op} \nonumber\\
  & = & \left\| \sum_{i, j} S_i \otimes \tilde{M} \otimes [T_j, e]
  \right\|_{op} \nonumber\\
  & = & \left\| \sum_{i, j} S_i \otimes [T_j, e] \right\|_{op}  \nonumber\\
  & = & \left\| \left[ \sum_{i, j} S_i \otimes T_j, \mathbb{I}_m \otimes e
  \right] \right\|_{op} \nonumber\\
  & = & \|[\mathcal{D}, \pi (e)]\|_{op} . 
\end{eqnarray}
So the spectral triples $(\mathbb{M}_k (\mathbb{C}), \mathcal{H},
\mathcal{D})$ and $(\mathbb{M}_k (\mathbb{C}), \mathcal{H}', \mathcal{D}')$
have the same ball condition,
\begin{equation}
  B' \equiv \{e \in \mathcal{A}: \|[\mathcal{D}', \pi' (e)]\|_{op} \leqslant 1\} =
  B \equiv \{e \in \mathcal{A}: \|[\mathcal{D}, \pi (e)]\|_{op} \leqslant 1\} .
\end{equation}
Therefore, the Connes spectral distances between the states $\rho_1, \rho_2
\in \mathbb{M}_k (\mathbb{C})$ in the spectral triple $(\mathbb{M}_k
(\mathbb{C}), \mathcal{H}', \mathcal{D}')$ are all equal to those of the
corresponding states in $(\mathbb{M}_k (\mathbb{C}), \mathcal{H},
\mathcal{D})$,
\begin{equation}
  d_{\mathcal{D}} (\rho_1, \rho_2) = \sup_{e \in B} | \mathrm{tr}_F (\Delta
  \rho \, e) | = \sup_{e \in B'} | \mathrm{tr}_F (\Delta \rho
  \, e) | = d_{\mathcal{D}'} (\rho_1, \rho_2) .
\end{equation}
\end{proof}

For example, it is easy to check that, in the following spectral triple, the
spectral distances between one-qubit states are equal to their quantum trace distances and also the Euclidean
distances of the corresponding Bloch vectors,
\begin{equation}
  \mathcal{A}=\mathbb{M}_2 (\mathbb{C}), \quad \mathcal{H}_8 =\mathbb{C}^8,
  \quad \mathcal{D}_8 = \frac{1}{4}  (\sigma_2 \otimes \tilde{M} \otimes
  \sigma_1 + \sigma_3 \otimes \tilde{M} \otimes \sigma_2 + \sigma_1 \otimes
  \tilde{M} \otimes \sigma_3),
\end{equation}
where $M$ is a non-zero $2\times 2$ Hermitian matrix, and
$\tilde{M} = \frac{M}{\|M\|_{op}}$.

The above concrete models can help us construct some other types of spectral triples with unitary invariant metric. These results can also help us understand the relations between the structure of Dirac operators and the properties of Connes spectral distance.

Furthermore, one can also use similar method to construct some spectral triples corresponding to higher-dimensional quantum states. But usually the calculations will be more cumbersome.

\section{Discussions and conclusions}\label{sec7}

In this paper, we study some properties of unitary invariance of Connes spectral distances between quantum states. We mainly focus on spectral triples with matrix algebras acting on finite dimensional Hilbert spaces via some linear representations. Furthermore, we also assume that the spectral distances are all finite. We derive some elementary properties of the Connes spectral distances and the optimal elements. The Lipschitz seminorms of the optimal elements are equal to $1$.
The unitary invariance of the spectral distances is equivalent to the unitary invariance of the Lipschitz seminorm.
If we assume the representation to be unital, then the spectral distances is also unitary invariant under some unitary transformations of the Dirac operators.

We also prove that there are some finite spectral triples in which the Lipschitz seminorms are equal to the operator norms.
In this case, the Connes spectral distances are just the normal quantum trace distances between quantum states.
If the representation is unital, then the only possible choice of the matrix algebra $\mathcal{A}$ is $\mathbb{M}_2(\mathbb{C})$. This just corresponds to the case of one-qubit states.
We also explicitly construct some concrete spectral triples for one-qubit states in which the Connes spectral distances are exactly the quantum trace distances.

These results are helpful for us to better study the relationships among Connes spectral distance, quantum trace distance and other quantum distance measures.
These concrete examples are significant for studies of geometric structures of finite spectral triples and mathematical relations of qubits and other quantum states in the framework of noncommutative geometry. These results can also be useful to
the studies of Connes spectral distances in other types of spectral triples.

\section*{Acknowledgments}
This work is partly supported by the
Guangdong Basic and Applied Basic Research Foundation (Grant No.
2024A1515010380), the 2024 Guangdong Province Education Science Planning
Project (Higher Education Special) (No. 2024GXJK455), the 2024 Guangdong
Higher Education Teaching Reform Project.


\begin{thebibliography}{100}

	\bibitem{Connes}A. Connes, \textit{Noncommutative geometry}
	(Academic Press, New York, 1994).
	
	\bibitem{Connes1}A. Connes, ``Compact metric spaces, Fredholm modules and
	hyperfiniteness.'' \textit{Ergodic Theory Dynam. Systems}
	\textbf{9} 207-220 (1989).
	
	\bibitem{Bimonte}G. Bimonte, F. Lizzi, G. Sparano, ``Distances on a
	lattice from non-commutative geometry.'' \textit{Phys. Lett. B}
	\textbf{341} 139-146 (1994).
	
	\bibitem{Dai}J. Dai, X.-C. Song, ``Connes' distance of one-dimensional
	lattices: general cases.'' \textit{Commun. Theor. Phys.}
	\textbf{36} 519 (2001).
	
	\bibitem{Cagnache}E. Cagnache, F. D'Andrea, P. Martinetti, J.-C. Wallet,
	``The spectral distance on the Moyal plane.'' \textit{J. Geom.
				Phys.} \textbf{61} 1881-1897 (2011).
	
	\bibitem{Wallet}J. C. Wallet, ``Connes distance by examples: Homothetic
	spectral metric spaces.'' \textit{Rev. Math. Phys.}
	\textbf{24} 1250027 (2012).
	
	\bibitem{Martinetti1}P. Martinetti, F. Mercati, L. Tomassini, ``Minimal
	length in quantum space and integrations of the line element in
	noncommutative geometry.'' \textit{Rev. Math. Phys.}
	\textbf{24} 1250010 (2012).
	
	\bibitem{Martinetti}P. Martinetti, L. Tomassini, ``Noncommutative geometry
	of the Moyal plane: translation isometries, Connes' distance on coherent
	states, Pythagoras equality.'' \textit{Commun. Math. Phys.}
	\textbf{323} 107--141 (2013).
	
	\bibitem{DAndrea}F. D'Andrea, F. Lizzi, J.C. Varilly, ``Metric properties
	of the fuzzy sphere.'' \textit{Lett. Math. Phys.}
	\textbf{103} 183--205 (2013).
	
	\bibitem{Pythagoras}F. D'Andrea, P. Martinetti, ``On Pythagoras theorem
	for products of spectral triples.'' \textit{Lett. Math. Phys.}
	\textbf{103} 469--492 (2013).
	
	\bibitem{Pythagoras1}F. D'Andrea, ``Pythagoras theorem in noncommutative
	geometry'' \textit{Contemporary Mathematics}
	\textbf{676} 175-210 (2016).
	
	\bibitem{Franco}N. Franco, J. C. Wallet, ``Metrics and causality on Moyal
	planes.'' \textit{Contemporary Mathematics}
	\textbf{676} 147-173 (2016).
	
	\bibitem{Scholtz}F. G. Scholtz, B. Chakraborty, ``Spectral triplets,
	statistical mechanics and emergent geometry in non-commutative quantum
	mechanics.'' \textit{J. Phys. A: Math. Theor.}
	\textbf{46} 085204 (2013).
	
	\bibitem{Chaoba}Y. Chaoba Devi, S. Prajapat, A. K. Mukhopadhyay, B.
	Chakraborty, F. G. Scholtz, ``Connes distance function on fuzzy sphere and
	the connection between geometry and statistics.'' \textit{J. Math.
				Phys.} \textbf{56} 041707 (2015).
	
	\bibitem{Revisiting}Y. Chaoba Devi, K. Kumar, B. Chakraborty, F. G.
	Scholtz, ``Revisiting Connes' finite spectral distance on noncommutative
	spaces: Moyal plane and fuzzy sphere.'' \textit{Int. J. Geo. Methods
				Mod. Phys.} \textbf{15} 1850204 (2018).
	
	\bibitem{Kumar}K. Kumar, B. Chakraborty, ``Spectral distances on the
	doubled Moyal plane using Dirac eigenspinors.'' \textit{Phys. Rev.
				D} \textbf{97} 086019 (2018).
	
	\bibitem{Barrett}J. W Barrett, P. Druce, L. Glaser, ``Spectral estimators
	for finite non-commutative geometries.'' \textit{J. Phys. A: Math.
				Theor.} \textbf{52} 275203 (2019).
	
	\bibitem{Chakraborty}A. Chakraborty, B. Chakraborty, ``Spectral distance
	on Lorentzian Moyal plane.'' \textit{Int. J. Geo. Methods Mod.
				Phys.} \textbf{17} 2050089 (2020).
				
	\bibitem{Lin1}B. S. Lin, T. H. Heng, ``Connes distance of $2 D$ harmonic
oscillators in quantum phase space.'' \textit{Chin. Phys. B}
\textbf{30} 110203 (2021).
	
	\bibitem{Lin2}B. S. Lin, T. H. Heng, ``Connes spectral distance and
	nonlocality of generalized noncommutative phase spaces.''
	\textit{Eur. Phys. J. Plus} \textbf{137} 899 (2022).
	
	\bibitem{Clare}P. Clare, C.-K. Li, E. Poon, E. Swartz, ``Noncommutative
	distances on graphs: An explicit approach via Birkhoff-James
	orthogonality.'' \textit{J. Geom. Phys.} \textbf{213}
	105483 (2025).

	\bibitem{Martinettip} P. Martinetti, ``From Monge to Higgs: a survey of distance computations in noncommutative geometry.'' \textit{Contemporary Mathematics.} \textbf{676} 1--46 (2016).


	\bibitem{Nielsen}M. A. Nielsen, I. L. Chuang, \textit{Quantum
				Computation and Quantum Information} (Cambridge University Press,
	Cambridge, 2000).
	
	\bibitem{Iochum}B. Iochum, T. Krajewski, P. Martinetti, ``Distances in
	finite spaces from noncommutative geometry.'' \textit{J. Geom.
				Phys.} \textbf{37} 100--125 (2001).
				
	\bibitem{Martinettif} P. Martinetti, ``Distances en geometrie non commutative.'' [arXiv:math-ph/0112038] (2001).
	

	
	\bibitem{Bassi} J. Bassi, R. Conti, ``On isometries of spectral triples associated to AF-algebras
	and crossed products.'' \textit{J. Noncommut. Geom.} \textbf{18} 547--566 (2024).

	\bibitem{Kumar1} K. Kumar, ``Calibrated Helstrom geometry on the Bloch ball via Connes spectral distance.'' [arXiv:2606.13824] (2026).

   \bibitem{LX} B. S. Lin, Z. H. Xu, J. H. Wang, H. L. Chen,
  ``Connes spectral distances, quantum discord and coherence of qubits.''
  \textit{Phys. Scr.} \textbf{101} 095204 (2026).
  
\end{thebibliography}
\end{document}